\def\xybiglabels{\def\labelstyle{\textstyle}} 
\newtheorem{theorem}{Theorem}
\numberwithin{theorem}{subsection}
\newtheorem{conjecture}[theorem]{Conjecture}
\newtheorem{lemma}[theorem]{Lemma}
\newtheorem{varexample}[theorem]{Example}
\newtheorem{definition}[theorem]{Definition}
\newtheorem{varremark}[theorem]{Remark}
\newenvironment{remark}{\begin{varremark}\em}{\em\end{varremark}}
\newenvironment{proof}{\textbf{Proof}:}{$\Box$\vskip 10pt}
\newcommand{\ra}{\rightarrow}
\newcommand{\opname}[1]{\operatorname{#1}}
\newcommand{\cat}[1]{\boldsymbol{\opname{#1}}}
\newcommand{\squaremor}[1]{\begin{array}{|lcr|}\hline & {#1} & \\
    \hline\end{array}}
\newcommand{\wquot}{/\!\!/}
\newcommand{\G}{\mathcal{G}}
\newcommand{\C}{\cat{C}}
\newcommand{\act}{\blacktriangleright}
\newcommand{\R}{\mathbb{R}}
\begin{document}

\title{2-Group Actions and Moduli Spaces of Higher Gauge Theory}
\author{\sc Jeffrey C.\ Morton}

\affil{{\it Mathematics Department} \\
       {\it Buffalo State University}\\
        {\it 1300 Elmwood Avenue}\\
        {\it Buffalo NY 14222 USA}\\mortonjc@buffalostate.edu }

\author{\sc Roger Picken}

\affil{{\it Center for Mathematical Analysis, Geometry and Dynamical Systems}\\
{\it Mathematics Department}\\{\it Instituto Superior T\'{e}cnico, Universidade de Lisboa}\\
{\it Avenida Rovisco Pais, 1049-001 Lisboa, Portugal} \\
roger.picken@tecnico.ulisboa.pt}

\maketitle

\begin{abstract}
A framework for higher gauge theory 
based on a 2-group is presented, by constructing 
a groupoid of connections on a manifold  acted on by a 2-group of gauge 
transformations, following 
previous work by the authors where the general notion of the action of 
a 2-group on a category was defined. The connections are discretized, given by 
assignments of 2-group data to 1- and 2-cells coming from a given cell structure 
on the manifold, and likewise the gauge transformations are given by 2-group
assignments to 0-cells. The 2-cells of the manifold are endowed with a bigon 
structure, matching 
the 2-dimensional algebra of squares which is used for calculating with 2-group data.
Showing that the action of the 2-group of gauge transformations on the groupoid of
connections  is well-defined is the central result. The effect, on the 
groupoid of connections, of 
changing the discretization is studied, and partial results and conjectures 
are presented around this issue. The transformation double category that 
arises from the action of a 2-group on a category, 
as defined in previous work by the authors, is described for the case at hand, where
it becomes a transtormation double groupoid. 
Finally, examples of the construction are given for 
simple choices of manifold:
the circle, the 2-sphere and the torus.

\end{abstract}

\newpage

\tableofcontents

\section{Introduction}

\subsection{General Background}

The notion of symmetry plays a fundamental role throughout mathematics
and its applications. In this paper, we investigate its role in 
{\em higher gauge theory} (HGT), a generalization of ordinary gauge theory
of considerable recent interest in physics. For an introduction to HGT, see
\cite{baez-huerta}.

Gauge theory is the study of certain geometric structures on
manifolds, namely connections. If $G$ is a group and $M$ a connected
manifold, the \textit{moduli space} of principal $G$-bundles over $M$
equipped with a connection is a space in which each point represents a
choice of principal $G$-bundle with connection. These have a wide
variety of applications. Physically, gauge theories are used to
represent particles and fields on a background spacetime $M$, so that
a connection on a principal bundle represents a (classical) state of
such a physical system. In geometric topology, one can study a
manifold by means of such geometric structures and their moduli
spaces.

For such a theory on a given manifold, the symmetries of the moduli
space are very important. In a physical
situation modeled by a gauge theory, there may be many distinct
connections on $G$-bundles which represent physically
indistinguishable states. This situation can be described as a
groupoid $\mathcal{A}(M,G)$. The objects of this groupoid are
principal $G$-bundles over $M$ equipped with a connection.  Other
groupoids, such as $\mathcal{A}_0(M,G)$, whose objects are principal
$G$-bundles with flat connection, also play important roles. In each
case, the morphisms are symmetries of such configurations, which are
the gauge transformations. One way of looking at this groupoid is in
terms of global symmetry: each gauge transformation is a
transformation of the bundle which fixes the base space, and 
transforms the connection in a compatible way.
 These form a group which acts on the space of
connections. The groupoid is, from this point of view, the
\textit{transformation groupoid} (also called the \textit{action
  groupoid}) associated to this action.

More recently, it has become common, instead of a moduli space, to
refer to a \textit{moduli stack}. While we do not wish to enter into
the general theory of stacks here, we merely note that they are
associated to equivalence classes of groupoids, so that the groupoids
described above may be seen  to represent the same moduli stack. 
 This
bypasses the problems associated with both the approach of using the
\textit{fine moduli space}, in which all connections appear as
distinct points, and the approach using the
\textit{coarse moduli space}, the quotient in
which isomorphic connections are identified. The problem with the fine
moduli space is precisely that such isomorphisms are not apparent,
while the coarse moduli space may have inconvenient geometric features
(as quotients of manifolds by group actions need not themselves be
manifolds). However, if one takes these two spaces as the 
set of objects of
two different groupoids (the second of which will contain only
automorphisms of its objects), these will be  
equivalent as categories,
and will represent the same moduli stack, which is one motivation for
this approach.

Now, in higher gauge theory, the role played by the structure group $G$ in
ordinary gauge theory is taken instead by a higher-categorical object,
$\G$, called a 2-group (or, more generally, an $n$-group).  Yetter
\cite{yettqft}, and Martins and Porter \cite{martinsporter} have
discussed an extension, or ``categorification'' of the
Dijkgraaf-Witten theory for $G$-bundles, which gives invariants of the type
described above, replacing $G$ by a 2-group $\G$. We will recall the definition
and basic facts about 2-groups in Section
\ref{sec:crossedmodules}. Here we only remark that there is a general
program of reproducing the machinery of gauge theory for these
higher-categorical groups. For a discussion of such invariants based
on higher gauge theory, see for instance work by Baez and
Schreiber \cite{basch-hgt}, and Bartels \cite{bartels}.

To understand the symmetries of such a theory, we need
to generalize the notion of symmetry which is relevant in the context
of groups (and group actions in particular) to the context of
2-groups. This paper is the second in a series of three closely
connected works which approach this question.

In the first paper of the series, we described the notion of a 2-group
action on a category, and the categorified analog of the
transformation groupoid associated to a group action
\cite{morton-picken-i}. This established that there is a double
category associated to every such action, which we call the
transformation double category.

Our aim in this paper is to describe the analog, in a higher gauge
theory based on a 2-group $\G$, of the symmetry group of all gauge
transformations which acts on the moduli space of principal
$G$-bundles with connection on a manifold $M$, in ordinary gauge theory.
The analog will take the form of a 2-group $\cat{Gauge}$ acting
on a category $\cat{Conn}$ of connections, which we may call the
``moduli 2-space''. 

In the third paper of the series, we will show a key result about the
transformation double category $\cat{Conn}\wquot \cat{Gauge}$
associated to this 2-group action. In particular, we demonstrate that
it is equivalent to a double category which emerges naturally as a
result of treating connections in terms of transport
2-functors. Indeed, by making suitable choices, we can arrange that
the equivalence is a strict isomorphism. This result is an analog of a
well-known fact in ordinary gauge theory showing the equivalence of
two ways of describing the groupoid of connections
\cite{schreiberwaldorfii}. 

To make sense of all this, we will need to establish the nature of what
we have just referred to as the ``moduli 2-space'' on which a symmetry
2-group may act. Just as it makes sense to consider group actions on
an object of any category, the natural context for 2-group actions is
on objects of 2-categories. However, the familiar context for group
actions is the category $\cat{Set}$ of sets and functions - which can
be refined in case the group and the set have additional structure,
such as a Lie group acting on a manifold. Similarly, the natural
situation for a 2-group action is in the 2-category $\cat{Cat}$, of
categories, functors, and natural transformations. This, again, may be
refined in the case that these have extra structure, geometric or
otherwise.

There are some new features that appear in moving from actions of
groups to actions of
2-groups. As we will see, there are two intrinsically rather
different kinds of symmetry relation between $\G$-connections, which
in the literature on higher gauge theory are often both described as
``gauge transformations''.  A distinction occurs here which cannot
arise for ordinary group actions on sets and set-based structures, but
will naturally occur in higher gauge theory.  Namely, certain of the
symmetries between connections occur naturally as morphisms in the
2-space (i.e. category) of connections itself, while others arise from
the action. A similar distinction will occur in a higher gauge theory
based on an $n$-group for any $n \geq 2$.

That is, two connections may be related, as objects of $\cat{Conn}$,
by a morphism in $\cat{Conn}$ itself. Or, on the other hand, they may
be related by the action of an object in the 2-group $\cat{Gauge}$. In
the construction we give here of the transformation double category of
a 2-group action, these will be the horizontal and vertical morphisms
respectively.  We give advance notice here that we also refer to these
occasionally as, respectively, \textit{costrict} and \textit{strict}
gauge transformations.  What are commonly called gauge transformations
are in general mixtures of these two types.

Another layer of structure exists in this setting, also commonly
included under the general umbrella term ``gauge transformation'',
which is represented by the morphisms of the symmetry 2-group
$\cat{Gauge}$. These we refer to simply as squares, or
occasionally as \textit{gauge modifications}. They may be thought of
either as relating morphisms of $\cat{Conn}$ (that is, costrict gauge
transformations), or as relating objects of $\cat{Gauge}$ (that is,
strict gauge transformations), depending on the direction in the
double category in which we take the source and target of a square.

The terminology of strict and costrict gauge transformations, and of
gauge modifications, is chosen by analogy with that used in the
context of 2-functors between 2-categories: a natural transformation
is a map between 2-functors, and a modification is a map between
natural transformations. In a forthcoming paper
\cite{morton-picken-iii}, we will show that this parallel is more than
simply an analogy. There we will show that the transformation
structure asssociated to the 2-group action constructed here is
equivalent to a certain double category of transport functors, strict and
costrict natural transformations, and modifications.

\subsection{Overview}

In section \ref{sec:crossedmodules}
 we recall the notion of a 2-group $\G$ in its guise as a crossed module,
and then introduce a convenient 2D calculus based on squares labelled with
crossed module data. In this context we note that in higher categories one
is often faced with different choices for the shapes of higher morphisms, e.g. for 
2-morphisms one can consider a simplicial shape (triangles), a cubical shape (squares),
or a globular shape (bigons). The special squares of the 2D calculus combine features
of squares and bigons, as we will see.

In section \ref{sec:catconn}, we embark on our study of HGT based on a 2-group $\G$ and with 
discretized connections. First, in subsection \ref{sec:Hdisc-conn-transp}, 
we introduce the cell structure on
the manifold $M$, which forms the basis for the discretization. The main feature 
of this cell structure is what we call a bigon structure on the 2-cells, which is set up
to match the  2D algebra of squares. Discretized connections on $M$ are then 
defined to be suitable assignments of 2-group data to the 1- and 2- cells of $M$.
The aim is to avoid the analytic issues associated with infinite-dimensional spaces
of connections in general, which need to be addressed, for example,  by the
use of Frechet manifolds or diffeological spaces.

We construct a category of connections $\cat{Conn}$ with discretized connections 
as its objects. The morphisms of this category can be thought of as certain types of 
gauge transformations, and this entire category, including the morphisms,
 constitutes the structure which
corresponds to the \textit{fine} moduli space. Thus we might describe $\cat{Conn}$ as
a moduli ``2-space'' to suggest this correspondence, with the
understanding that a 2-space is merely a category in which both
objects and morphisms form spaces (e.g. topological spaces, manifolds, schemes, etc. 
depending on the context).
We also show that the category $\cat{Conn}$ has invertible morphisms,
i.e. it is a groupoid. 

Then in subsection \ref{sec:2group-action}, the 2-group of gauge transformations
$\cat{Gauge}$ is introduced, given by assignments of 2-group data to 0-cells of $M$. 
We recall the general notion of the action of  a 2-group on a category 
from our previous work \cite{morton-picken-i}, 
and define the action of $\cat{Gauge}$ on $\cat{Conn}$. A main
result is proving that this is indeed a valid action.

In subsection \ref{sec:ch-discr}, we address the issue of what happens when the choice of 
discretization of $M$ is changed. We provide partial answers for how this affects 
the category $\cat{Conn}$, along with some conjectures about more general statements
and the continuum limit.

In section \ref{sec:transdoublegpd} we recall our construction \cite{morton-picken-i} 
of a transformation double category from any
action of a 2-group $\G$ on a category $\cat{C}$. 
When the category $\cat{C}$ is a groupoid, the transformation double category becomes 
a transformation double groupoid.
This is the categorification of the 
transformation groupoid that arises from the action of a group $G$ on a set.
We then particularize to the case at hand and give details of the transformation
double groupoid for the action of $\cat{Gauge}$ on $\cat{Conn}$.
This explicit description is 
useful for two reasons. First, the double groupoid contains all
the information in the action, packaged in a convenient way, and so is
of intrinsic interest to understanding the symmetry of the 2-space
$\cat{Conn}$. Second, it will be the main object of study in our
forthcoming third paper in this series \cite{morton-picken-iii}, 
where we show it is equivalent
to a double groupoid constructed using an approach based on transport
2-functors.

In section \ref{sec:geom-examples}, we describe features of the
action of $\cat{Gauge}$ on $\cat{Conn}$ for a number of
elementary examples, when $M$ is  
the circle $S^1$, the sphere $S^2$, and the torus
$T^2$. These are chosen because they are manifolds where only the
first fundamental groupoid is nontrivial, where only the second
fundamental groupoid is nontrivial, and where both are
nontrivial. The examples are sufficient to illustrate the effect of each of
these elements of the homotopy type of a manifold on the resulting
construction. Moreover, they have special features that make them
particularly interesting in their own right.

The example of the circle
initially seems of limited importance,  since the most salient feature of 
2-group gauge theory is that it is possible to define
2-dimensional parallel transport over surfaces - hence, on manifolds
of dimension at least 2.  However, just as with 1-groups, the
$\G$-connections on the circle, and the various levels of gauge transformations 
between them, model the
adjoint action of the 2-group on itself. Indeed, the structure of the
circle as an oriented cell complex encodes precisely what we mean by
this adjoint action by conjugation, since the edge representing the
circle has its endpoints attached to each other with opposite
orientations on the two endpoints.

As a final remark, we note that all calculations are carried out using 
ordinary algebra and 
the 2D algebra of squares introduced in section \ref{sec:crossedmodules}. However, 
along the way we also
give various
pointers to some underlying 3D and 4D algebraic structures, not pursued in depth 
here.
See in particular Rem. \ref{rem:bigon-cyl}, Rem. \ref{rem:3Daction},
 Fig. \ref{fig:3D-circle} and 
Fig. \ref{fig:2gauge-torus}.

\subsection{Related Work}

There are a number of closely related articles that we wish to draw attention to. 

First, the use of double groupoids for approaching higher gauge theory is very 
much to the fore in an article by Soncini and Zucchini \cite{soncinizucchini}, although
the perspective there is rather different to ours. These authors describe HGT transports
on a manifold $M$ endowed with connection 1- and 2-forms,
in terms of double functors from a double groupoid of rectangles in $\R^2$ to
a double groupoid constructed from the chosen Lie crossed module. Gauge transformations are 
then given by double natural transformations and double modifications between these
double functors. 

A careful study of HGT along similar lines to ours was carried out by 
Bullivant, Calcada, Kádár, Faria Martins and Martin \cite{bullivant-et-al} in the 
context of their investigation
 of topological phases of matter in 3+1 dimensions. They 
also discretize the higher connections using  manifolds with an adapted cell structure, 
termed a 2-lattice structure. The 2-groups that they consider are finite, but this 
doesn't prevent a comparison with our approach without this restriction. 
They focus on transports along 2-disks and holonomies along 2-spheres, whereas our 
examples include also the circle and the torus. 
The gauge transformations in \cite{bullivant-et-al} correspond
to the morphisms in our category of connections and the objects of our 2-group of
gauge transformations, as will be described in section \ref{sec:catconn} below.
For their purpose they do not need the higher level of gauge transformations which 
in our approach are given by the morphisms of the gauge 2-group, and this is the 
most significant difference between the two approaches. 
However there are 
 also many similarities, and we will return to more detailed 
comparisons at appropriate points in the main text. 

We would also like to mention two articles by one of us with D. Bragança 
\cite{braganca-picken-i, braganca-picken-ii}, which concern the case where 
$M$ is a surface, possibly with boundary, and the group \cite{braganca-picken-i}
or 2-group \cite{braganca-picken-ii} is finite. There is an underlying
cell structure on $M$ captured by the notion of ``cut cellular surface'', and
the counting invariants (which tie in with the Yetter invariant \cite{yettqft})
and TQFT's that are the main focus in these articles
call out for an interpretation as the counting measure of the moduli 
spaces of  flat connections or flat 
higher  connections that are the subject of the present paper.

Finally, in work by one of us together with J. Nelson, see \cite{nelsonpicken} 
and references therein, models of quantum gravity in 2+1 dimensions are studied using traces
of holonomies (Wilson loops). These exhibit area phases relating loops that are  
homotopic on the spatial surface, which strongly suggests an 
underlying HGT mechanism.

\subsection{Considerations for Future Work}

It is our hope that examining this 2-group symmetry action in 
higher gauge theory will be
illuminating in its own right, and will serve as a practical illustration of our
earlier work \cite{morton-picken-i} on 2-group symmetry and transformation double
categories. More specifically, however, in future work \cite{morton-picken-iii}, 
we will show
the equivalence result referred to above, between the double groupoid
$\cat{Conn}\wquot \cat{Gauge}$ and a transport double groupoid.
 This will reveal that a slightly
unusual approach to categories of transport functors will be the
relevant one for higher gauge theory. Beyond this, we expect that our
double-category approach will be useful in extending 
other gauge theory constructions
 to higher gauge theories. In particular, the
construction of extended topological field theories using
2-linearization \cite{morton-2lin}, in contexts in which cobordism
categories become double categories of cobordisms with corners, as
previously studied by one of the authors \cite{morton-dlbicat}, appears
promising.
Another natural direction in which to extend our results is to include 
manifolds with boundary, like the surfaces with boundary of 
\cite{braganca-picken-i, braganca-picken-ii}, and to study examples 
with $M$ of dimension 3 or more, and non-vanishing curvature on 3-cells.

\section{Preliminaries on 2-Groups and Crossed Modules}
\label{sec:crossedmodules}

There are several different manifestations of 2-groups, namely 
as a certain type of 2-categories, 
as categorical groups (a certain type of category), or as crossed modules 
(an algebraic definition with no explicit categorical content). 
See \cite{morton-picken-i} for a detailed discussion of the relation between 
these different viewpoints. In this article we will mainly adopt the crossed module 
perspective and use a convenient 2D algebra based on squares labelled with
crossed module data.

\subsection{Crossed Modules and Calculus with Squares}

\begin{definition}
  A crossed module $\G$ consists of $(G,H,\rhd,\partial)$, where $G$ and
  $H$ are groups, $\rhd$ is an action of $G$ on $H$ by automorphisms
  and $\partial : H \ra G$ is a
  homomorphism, satisfying the following two conditions:
  \begin{eqnarray}
    \partial(g \rhd \eta) & =&  g \partial(\eta) g^{-1} \label{cm1}\\    
    \partial(\eta) \rhd \zeta & = &\eta \zeta \eta^{-1} \label{cm2}
  \end{eqnarray}
\end{definition}

Nontrivial examples of crossed modules arise, e.g., from central extensions of groups. See \cite[Examples 1.5-1.12]{martinspicken} for more examples.

Given a crossed module $\G$ we will be performing calculations using squares of the form
\begin{equation}
\xybiglabels \vcenter{\xymatrix@M=0pt@=3pc{\ar@{-} [d] _{} \ar@{-} [r]^{g} \ar@{} [dr]|\eta & \ar@{-} [d]^{} \\
\ar@{-} [r]_{g'}  & }}
\label{eq:square}
\end{equation}
where $g,g'\in G,\, \eta\in H$ and $\partial(\eta)= g'g^{-1}$. These squares are special cases of the squares of the double groupoid ${\cal D}(\G)$ of $\G$ \cite{brown-higgins-sivera,martinspickenii, morton-picken-i}, having the side edges of the squares labelled by $1_G\in G$ (displayed by their being unlabelled), instead of a generic $G$ element. Likewise omitting the label in the centre of the square denotes that it is labelled by $1_H$.

Horizontal and vertical composition of squares are given by:
\begin{equation*}
\xybiglabels \vcenter{\xymatrix @=3pc @W=0pc @M=0pc { \ar@{-}[r] ^{g_1} \ar@{-}[d]
_{} \ar@{}[dr]|{\eta_1} & \ar@{-}[r] ^{g_2} \ar@{-}[d]|{}
\ar@{}[dr]|{\eta_2} &  \ar@{-}[d]^{} 
\\ \ar@{-}[r] _{g'_1} & \ar@{-}[r] _{g'_2} & 
}}
 \,  = \,
\xybiglabels \vcenter{\xymatrix@M=0pt@=3pc@C=2pc{\ar@{-} [d] _{} \ar@{-} [rrr]^-{g_1g_2} & \ar@{} [dr]|-{\eta_1(g_1\rhd \eta_2) }& &\ar@{-} [d]^{} \\
\ar@{-} [rrr]_-{g'_1g'_2}  & &&}}
\qquad \qquad 
\xybiglabels \vcenter{\xymatrix@M=0pt@=3pc{\ar@{-} [d] _{} \ar@{-} [r]^{g} \ar@{} [dr]|{\eta} & \ar@{-} [d]^{} \\
\ar@{-} [r] |{g'} \ar@{-} [d]_{} \ar@{}[dr] |{\eta'} & \ar@{-} [d]^{} \\
\ar@{-} [r]_{g''}& }}  \,  = \,
\xybiglabels \vcenter{\xymatrix@M=0pt@=3pc{\ar@{-} [d] _{} \ar@{-} [r]^-{g}  \ar@{} [dr]|-{\eta' \eta} & \ar@{-} [d]^{} \\
\ar@{-} [r]_-{g''} &  }}
\end{equation*}
These operations are associative and satisfy the interchange law: given a 2 by 2 array of composable squares, the result of composing horizontally and then vertically is the same as composing vertically and then horizontally. Thus any rectangular array of squares has a unique evaluation as a single square.

Squares admit horizontal and vertical inverses, defined as follows:
\begin{equation*}
\xybiglabels \vcenter{\xymatrix@M=0pt@=3pc@C=1pc{\ar@{-} [d] _{} \ar@{-} [rrr]^-{g^{-1}} & \ar@{} [dr]|-{\eta^{-h}}& &\ar@{-} [d]^{} \\
\ar@{-} [rrr]_-{g'^{-1}}  & &&}} \, = \, 
\xybiglabels \vcenter{\xymatrix@M=0pt@=3pc@C=2pc{\ar@{-} [d] _{} \ar@{-} [rrr]^-{g^{-1}} & \ar@{} [dr]|-{g^{-1}\rhd \eta^{-1}}& &\ar@{-} [d]^{} \\
\ar@{-} [rrr]_-{g'^{-1}}  & &&}}
\qquad \qquad
\xybiglabels \vcenter{\xymatrix@M=0pt@=3pc@C=1pc{\ar@{-} [d] _{} \ar@{-} [rrr]^-{g'} & \ar@{} [dr]|-{\eta^{-v}}& &\ar@{-} [d]^{} \\
\ar@{-} [rrr]_-{g}  & &&}}  \, = \,
\xybiglabels \vcenter{\xymatrix@M=0pt@=3pc@C=1pc{\ar@{-} [d] _{} \ar@{-} [rrr]^-{g'} & \ar@{} [dr]|-{\eta^{-1}}& &\ar@{-} [d]^{} \\
\ar@{-} [rrr]_-{g}  & &&}} 
\end{equation*}
These inverses are appropriately both-sided (left/right for $\eta^{-h}$ and up/down for $\eta^{-v}$). Furthermore we will need the following properties for inverses of a horizontal or vertical composition:
\begin{eqnarray} 
\xybiglabels \vcenter{\xymatrix@M=0pt@=3pc@C=2pc{\ar@{-} [d] _{} \ar@{-} [rrr]^-{(g_1g_2)^{-1}} & \ar@{} [dr]|-{(\eta_1(g_1\rhd\eta_2))^{-h} }& &\ar@{-} [d]^{} \\
\ar@{-} [rrr]_-{(g'_1g'_2)^{-1}}  & &&}}
 &  = &
\xybiglabels \vcenter{\xymatrix @=3pc @W=0pc @M=0pc { \ar@{-}[r] ^{g_2^{-1}} \ar@{-}[d]
_{} \ar@{}[dr]|{\eta_2^{-h}} & \ar@{-}[r] ^{g_1^{-1}} \ar@{-}[d]|{}
\ar@{}[dr]|{\eta_1^{-h}} &  \ar@{-}[d]^{} 
\\ \ar@{-}[r] _{{g'_2}^{-1}} & \ar@{-}[r] _{{g'_1}^{-1}} & 
}}  \label{eq:hor-inverse-hor-prod}
\\
\xybiglabels \vcenter{\xymatrix@M=0pt@=3pc@C=2pc{\ar@{-} [d] _{} \ar@{-} [rrr]^-{g^{-1}} & \ar@{} [dr]|-{(\eta' \eta)^{-h}} & & \ar@{-} [d]^{} \\
\ar@{-} [rrr]_-{{g''}^{-1}} & && }}
 &  = &
\xybiglabels \vcenter{\xymatrix@M=0pt@=3pc@C=5pc{\ar@{-} [d] _{} \ar@{-} [r]^{g^{-1}} \ar@{} [dr]|{\eta^{-h}} & \ar@{-} [d]^{} \\
\ar@{-} [r] |{{g'}^{-1}} \ar@{-} [d]_{} \ar@{}[dr] |{{\eta'}^{-h}} & \ar@{-} [d]^{} \\
\ar@{-} [r]_{{g''}^{-1}}& }}   
\label{eq:hor-inverse-ver-prod} 
\\
\xybiglabels \vcenter{\xymatrix@M=0pt@=3pc@C=2pc{\ar@{-} [d] _{} \ar@{-} [rrr]^-{g''} & \ar@{} [dr]|-{(\eta' \eta)^{-v}} & & \ar@{-} [d]^{} \\
\ar@{-} [rrr]_-{{g}} & && }}
 &  = &
\xybiglabels \vcenter{\xymatrix@M=0pt@=3pc@C=5pc{\ar@{-} [d] _{} \ar@{-} [r]^{g''} \ar@{} [dr]|{\eta'^{-v}} & \ar@{-} [d]^{} \\
\ar@{-} [r] |{{g'}} \ar@{-} [d]_{} \ar@{}[dr] |{{\eta}^{-v}} & \ar@{-} [d]^{} \\
\ar@{-} [r]_{{g}}& }} \label{eq:ver-inverse-ver-prod} 
\end{eqnarray}

\section{Categories of Connections in Higher Gauge
  Theory}\label{sec:catconn}

Our objective is to describe higher connections and gauge transformations in relation to a given cell structure on $M$, so that connections become assignments of group elements to the 1- and 2-cells, and gauge transformations become assignments of group elements to the 0-cells. The connection assignments arise, in a differential geometric framework, from parallel transports obtained by integrating local connection 1- and 2-forms along the cells, taking account of the transitions between the open sets covering $M$. For a detailed account, see \cite{martinspickenii}. One of the underlying ideas is that these connection assignments constitute a discretization of a smooth connection, and the number of cells, although finite, can be as big as we like. On the other hand, for flat connections on simple manifolds, only a very small number of cells is required for a full characterization, and we will see in the examples of Section \ref{sec:geom-examples}   how such minimal discretizations can give a very efficient description.

Because of the algebraic relations between transports along oriented 1-cells and 2-cells, we introduce, in subsection  \ref{sec:Hdisc-conn-transp}, an additional structure, a bigon structure, on the 2-cells of $M$. This also reflects the fact that 
$M$ is a manifold and not just a topological space, i.e. it rules out some of the pathologies which may occur for the attaching maps of a cell complex in general.

\subsection{Discretized Higher Connections and the Category \texorpdfstring{$\cat{Conn}$}{} }
\label{sec:Hdisc-conn-transp}

We take our manifold $M$ to be endowed with a discretization $\cal D$, consisting of a finite cellular decomposition, a choice of orientation $O$ of the cells, and a choice of bigon structure $B$ on the 2-cells, to be described below. 
The cellular decomposition is a CW-decomposition, and we will frequently refer
to the $0$-, $1$- and $2$-cells as vertices, edges and faces, respectively. The corresponding
sets will be denoted $V$, $E$ and $F$. 

Let $\psi: S^1\rightarrow M$ be the attaching map of a 2-cell $f$ to the 1-skeleton of $M$, compatible with the 
positive orientation of the 2-cell. We say that the 2-cell has a bigon structure if the following conditions hold:

\begin{itemize}

\item the inverse image under $\psi$ of the 0-skeleton of $M$ consists of a finite subset $\{e^{i\theta_1}, \dots, e^{i\theta_n}\}\subset S^1\cong U(1)$, where
$0\leq \theta_1 < \theta_2 < \dots < \theta_n < 2\pi$.

\item  $\psi$ is injective onto an open 1-cell of $M$ for each of the collection of open arcs 
which constitute the complement in $S^1$ of the finite subset above.

\item one of the 0-cells in the boundary of the 2-cell is called the 0-source, denoted by $v$ in Figure \ref{fig:bigon-struc}, and one of the 0-cells is called the 0-target, denoted by $w$ in Figure \ref{fig:bigon-struc}. The 0-source and 0-target need not be distinct.

\item assuming the positive orientation of $f$ in Figure \ref{fig:bigon-struc}, the concatenations of 1-cells connecting the 0-source and the 0-target along the upper, respectively lower, boundary, are called the 1-source and 1-target of $f$ respectively, denoted by $e$ and $d$ in Figure \ref{fig:bigon-struc}.

\end{itemize}

\begin{figure}[h]
\begin{center}
\includegraphics[height=3cm]{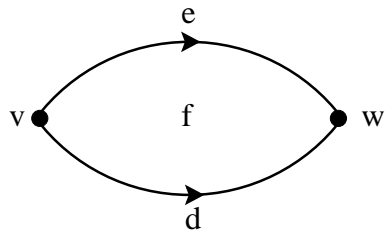}
\end{center}
\caption{Bigon structure on a $2$-cell}
\label{fig:bigon-struc}
\end{figure}

\begin{remark} Examples of 2-cells with bigon structure appear in Subsections \ref{subsec:S2-exp} and \ref{subsec:T2-exp}. 
A related construction is the notion of a {\em 2-lattice} defined in \cite[Def.~21]{bullivant-et-al}, which is also a CW-decomposition of $M$ with additional conditions. In particular, for a 2-lattice all cells of any dimension are endowed with a (single) basepoint in their boundary, and amongst other conditions on the attaching maps, all attaching maps of 3-cells are embeddings.

\end{remark}

Let $\G= (G,H,\rhd,\partial)$ be a crossed module. A discretized connection is  
an assignment of $G$-elements to the edges of $M$ and $H$ elements to the faces of $M$, 
respecting the bigon structure on the $2$-cells of $M$ as is made precise in the following definition.

\begin{definition}\label{def:catconn}
  The \textbf{category of connections}, $\cat{Conn} =
  \cat{Conn}(M, \G,{\cal D})$, is given as follows:
  \begin{itemize}
  \item \textbf{Objects} of $\cat{Conn}$
    consist of pairs of the form $(g,h)$, where $g : E \ra G,\, h : F \ra H$, subject to the condition:
\begin{equation*}
\xybiglabels \vcenter{\xymatrix@M=0pt@=3pc{\ar@{-} [d] _{} \ar@{-} [r]^{g(e)} \ar@{} [dr]|{h(f)} & \ar@{-} [d]^{} \\
\ar@{-} [r]_{g(d)}  & }}
\end{equation*}
for each face, 
i.e. $g(e)$ and $g(d)$ are not independent, but satisfy 
\begin{equation}\label{eq:conn-objects}
\partial(h(f))= g(d)g(e)^{-1}. 
\end{equation}
Since both $d$ and $e$ may be
composed of more than one edge, and each of these may be oriented from left to right or from right to left,
the value $g(e)$, and likewise $g(d)$, is determined by composing the $G$ elements assigned to the oriented
edges making up $e$, taken in order from left to right, and replacing the $G$ element by its inverse whenever the 
component edge is oriented from right to left.

    \item \textbf{Morphisms} of $\cat{Conn}$
consist of pairs $((g,h), \eta)$ where $(g,h)$ is an object and $\eta : E \ra H$. The source of $((g,h), \eta)$ is 
$(g,h)$ and the target of $((g,h), \eta)$ is $(g',h')$ given by:
\begin{equation} \label{eq:conn-morphisms-edge}
\xybiglabels \vcenter{\xymatrix@M=0pt@=3pc{\ar@{-} [d] _{} \ar@{-} [r]^{g(e)} \ar@{} [dr]|{\eta(e)} & \ar@{-} [d]^{} \\
\ar@{-} [r]_{g'(e)}  & }}
\end{equation}
for each edge $e$, and 
\begin{equation}
\xybiglabels \vcenter{\xymatrix@M=0pt@=3pc{\ar@{-} [d] _{} \ar@{-} [r]^{g(e)} \ar@{} [dr]|{h(f)} & \ar@{-} [d]^{} \\
\ar@{-} [r] |{g(d)} \ar@{-} [d]_{} \ar@{}[dr] |{\eta(d)} & \ar@{-} [d]^{} \\
\ar@{-} [r]_{g'(d)}& }} 
 \,  = \,
\xybiglabels \vcenter{\xymatrix@M=0pt@=3pc{\ar@{-} [d] _{} \ar@{-} [r]^{g(e)} \ar@{} [dr]|{\eta(e)} & \ar@{-} [d]^{} \\
\ar@{-} [r] |{g'(e)} \ar@{-} [d]_{} \ar@{}[dr] |{h'(f)} & \ar@{-} [d]^{} \\
\ar@{-} [r]_{g'(d)}& }} 
\label{eq:conn-morphisms}
\end{equation}
for each face $f$. Again, since both $d$ and $e$ may be
composed of more than one edge, and each of these may be oriented from left to right or from right to left,
the value $\eta(e)$, and likewise $\eta(d)$, is determined by multiplying horizontally the squares (\ref{eq:conn-morphisms-edge}) assigned to the oriented
edges making up $e$, taken in order from left to right, and replacing the square by its horizontal inverse whenever the 
component edge is oriented from right to left.

 \item \textbf{Composition} of morphisms is defined by 
  \begin{equation}\label{eq:conn-composition}
    ((g',h'), \eta')\circ   ((g,h), \eta) =  ((g,h), \eta'\eta)
  \end{equation}
where $(\eta'\eta)(e) = \eta'(e)\eta(e)$ for each edge. 
 \item \textbf{Identities} are given, for each object $(g,h)$, by
   \begin{equation}\label{eq:conn-identities}
    id_{(g,h)} =  ((g,h),1)
  \end{equation}

\end{itemize}
\end{definition}

\begin{remark}
We note that (\ref{eq:conn-morphisms}) can be rewritten to give a formula for the square with $h'(f)$, using the vertical inverse of the square above it in (\ref{eq:conn-morphisms}). We return to this in subsection \ref{subsec:S2-exp}.
\label{rem:h'-formula}
\end{remark}

\begin{remark}
We observe that (\ref{eq:conn-morphisms}) may be viewed from a 3D perspective as the equation for a commuting bigon cylinder - see Figure \ref{fig:comm-bigon-cyl}. We return to discussing such higher-dimensional perspectives in Remark \ref{rem:3Daction} and in the examples of Section \ref{sec:geom-examples}.
\label{rem:bigon-cyl}
\end{remark}

\begin{remark}
\label{rem:fullgt1}
The morphisms in $\cat{Conn}$ correspond to a special case of the {\em full gauge transformations} of  \cite[Def. 86, Fig. 7]{bullivant-et-al}). The latter are given in general by assignments of $G$ elements to vertices and $H$ elements to edges, and our morphisms correspond to trivializing the assignments to vertices (setting them all equal to $1_G$). We will return to this point in Remark \ref{rem:fullgt2}, when we introduce our notion of gauge transformations.
\end{remark}

\begin{figure}[htbp] 
\centerline{\relabelbox 
\epsfysize 6cm
\epsfig{file=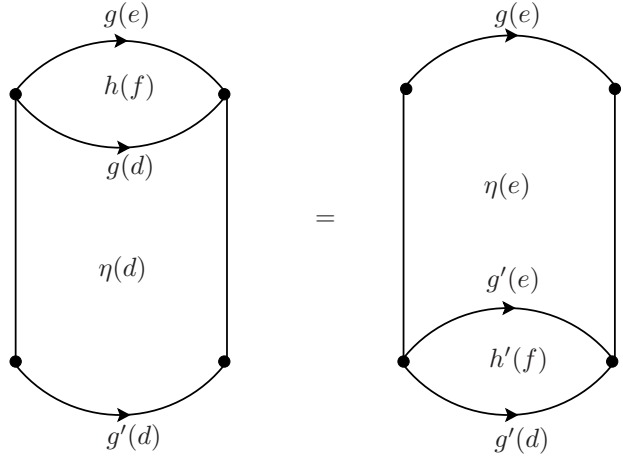,height=6cm}
\relabel{eta(d)}{$\eta(d)$}
\relabel{eta(e)}{$\eta(e)$}
\relabel{g(e)}{$g(e)$}
\relabel{g(d)}{$g(d)$}
\relabel{k(e)}{$g(e)$}
\relabel{=}{$=$}
\relabel{g'(d)}{$g'(d)$}
\relabel{g'(e)}{$g'(e)$}
\relabel{k'(d)}{$g'(d)$}
\relabel{h(f)}{$h(f)$}
\relabel{h'(f)}{$h'(f)$}
\endrelabelbox}
\caption{\label{fig:comm-bigon-cyl} 3D perspective on (\ref{eq:conn-morphisms}) - the composition of the two visible faces of the bigon cylinder equals the composition of the two hidden faces }
\end{figure}

\begin{theorem} The category $\cat{Conn}$ is well-defined.
\label{thm:conn}
\end{theorem}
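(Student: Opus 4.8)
The plan is to verify, one at a time, the data and axioms that constitute a category: that every morphism $((g,h),\eta)$ determines a unique target which is again an object; that composition is well-defined (the composite of composable morphisms has the correct source and target), associative and unital; and, interwoven with these, that the conventions for composite $1$-source and $1$-target edges $e,d$ are internally consistent. Almost everything reduces to the structural properties of the square calculus recalled in Section~\ref{sec:crossedmodules} --- associativity and the interchange law, so that any rectangular array of squares has a unique evaluation, together with the identities (\ref{eq:hor-inverse-hor-prod})--(\ref{eq:ver-inverse-ver-prod}) for inverses of composites --- plus the elementary fact that $\partial$ is a group homomorphism.

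First I would pin down the target of a morphism. Read at the level of a single edge, (\ref{eq:conn-morphisms-edge}) forces $g'(e)=\partial(\eta(e))\,g(e)$, so $g'\colon E\to G$ is uniquely determined; its value on a composite $1$-source or $1$-target is then the prescribed product of these values and their inverses, and the square (\ref{eq:conn-morphisms-edge}) for such a composite edge is just the horizontal composite (with horizontal inverses inserted for reversed component edges) of the single-edge squares, which by associativity and the interchange law is unambiguous and has top $g(e)$, bottom $g'(e)$ and centre label $\eta(e)$. With $g'$ in hand, (\ref{eq:conn-morphisms}) for each face is an equality of two vertically composed squares sharing the outer boundary $g(e)$ (top) and $g'(d)$ (bottom); comparing centre labels by the vertical composition rule yields $h'(f)=\eta(d)\,h(f)\,\eta(e)^{-1}$, so $h'\colon F\to H$ is uniquely determined. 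It remains to check that $(g',h')$ genuinely satisfies the object condition (\ref{eq:conn-objects}): applying $\partial$ to the formula for $h'(f)$ and using $\partial(\eta(d))=g'(d)g(d)^{-1}$, $\partial(h(f))=g(d)g(e)^{-1}$ (the object condition for $(g,h)$) and $\partial(\eta(e))=g'(e)g(e)^{-1}$, the middle factors telescope to give $\partial(h'(f))=g'(d)g'(e)^{-1}$, as required.

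Next, composition. For composable morphisms $((g,h),\eta)$ and $((g',h'),\eta')$ --- i.e.\ with the target of the first equal to the source $(g',h')$ of the second --- the composite $((g,h),\eta'\eta)$ of (\ref{eq:conn-composition}) has $(g,h)$ as source by construction, and I would check its target equals the target $(g'',h'')$ of $((g',h'),\eta')$. On edges this is $\partial((\eta'\eta)(e))\,g(e)=\partial(\eta'(e))\partial(\eta(e))\,g(e)=\partial(\eta'(e))\,g'(e)=g''(e)$, using that $\partial$ is a homomorphism; on faces it amounts to $h''(f)=(\eta'\eta)(d)\,h(f)\,((\eta'\eta)(e))^{-1}$, which follows from $h''(f)=\eta'(d)h'(f)\eta'(e)^{-1}$ once one knows $(\eta'\eta)(d)=\eta'(d)\eta(d)$ for composite edges. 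This last identity is the one place where the interchange law (and, for reversed component edges, the inverse identities (\ref{eq:hor-inverse-hor-prod}) and (\ref{eq:hor-inverse-ver-prod})) does real work: it says that horizontally composing the vertical composites of the single-edge squares agrees with vertically composing their horizontal composites. Associativity of composition is then immediate from associativity in $H$ (edgewise, $((\eta''\eta')\eta)(e)=\eta''(e)\eta'(e)\eta(e)=(\eta''(\eta'\eta))(e)$), and the identity $id_{(g,h)}=((g,h),1)$ of (\ref{eq:conn-identities}) has target $(g,h)$ (since $\partial(1)=1$ and $1\,h(f)\,1^{-1}=h(f)$) and is a two-sided unit because $1$ is the unit of $H$.

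I expect the main obstacle --- more a matter of care than of ingenuity --- to be the bookkeeping around composite $1$-source and $1$-target edges: one must confirm that reading (\ref{eq:conn-morphisms-edge}) and (\ref{eq:conn-morphisms}) for multi-edge $e,d$ neither over- nor under-determines $g'$ and $h'$, that the two orientation conventions for a right-to-left component edge (inverting its $G$-label, versus taking the horizontal inverse of its square) are mutually compatible, and that the evaluations of $\eta(e)$, $\eta(d)$ and $(\eta'\eta)(d)$ are independent of bracketing. All of these are consequences of the uniqueness of evaluation of rectangular arrays of squares and the inverse identities established in Section~\ref{sec:crossedmodules}, so the proof is a somewhat lengthy but routine verification rather than anything conceptually new.
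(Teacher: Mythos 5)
Your proposal is correct and follows essentially the same route as the paper: the crux in both is showing that the composite $((g,h),\eta'\eta)$ satisfies (\ref{eq:conn-morphisms}) with the correct target $(g'',h'')$, which you obtain via the explicit formula $h'(f)=\eta(d)\,h(f)\,\eta(e)^{-1}$ --- the element-wise form of the paper's manipulation of inserting the square $h'(f)^{-v}$ between the two face conditions and cancelling. Your extra verifications (that the target obeys the object condition (\ref{eq:conn-objects}), and the interchange-law bookkeeping for multi-edge $1$-sources and $1$-targets) simply make explicit points the paper leaves implicit, and associativity and identities are dismissed as routine in both.
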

\begin{proof}
The main point to verify is that composition is well-defined. This is shown by combining
condition (\ref{eq:conn-morphisms}) for the morphism $((g,h), \eta)$ with the corresponding condition 
for the morphism $((g',h'), \eta')$:
\begin{equation*}
\xybiglabels \vcenter{\xymatrix@M=0pt@=3pc{\ar@{-} [d] _{} \ar@{-} [r]^{g'(e)} \ar@{} [dr]|{h'(f)} & \ar@{-} [d]^{} \\
\ar@{-} [r] |{g'(d)} \ar@{-} [d]_{} \ar@{}[dr] |{\eta'(d)} & \ar@{-} [d]^{} \\
\ar@{-} [r]_{g''(d)}& }} 
 \,  = \,
\xybiglabels \vcenter{\xymatrix@M=0pt@=3pc{\ar@{-} [d] _{} \ar@{-} [r]^{g'(e)} \ar@{} [dr]|{\eta'(e)} & \ar@{-} [d]^{} \\
\ar@{-} [r] |{g''(e)} \ar@{-} [d]_{} \ar@{}[dr] |{h''(f)} & \ar@{-} [d]^{} \\
\ar@{-} [r]_{g''(d)}& }}
\end{equation*}
and multiplying these equations vertically after inserting the square 
$$
\xybiglabels \vcenter{\xymatrix@M=0pt@=3pc{\ar@{-} [d] _{} \ar@{-} [r]^{g'(d)} \ar@{} [dr]|{h'(f)^{-v}} & \ar@{-} [d]^{} \\
\ar@{-} [r]_{g'(e)}  & }}
$$
between them. Cancelling the $h'(f)$ square with this vertical inverse on both sides of the equation, and composing the $\eta$
and $\eta'$ squares vertically, gives equation (\ref{eq:conn-morphisms}) for
the composite morphism $((g,h), \eta'\eta)$, ensuring that it has the correct target:
\begin{equation*}
\xybiglabels \vcenter{\xymatrix@M=0pt@=3pc@C=5pc{\ar@{-} [d] _{} \ar@{-} [r]^{g(e)} \ar@{} [dr]|{h(f)} & \ar@{-} [d]^{} \\
\ar@{-} [r] |{g(d)} \ar@{-} [d]_{} \ar@{}[dr] |{(\eta'\eta)(d)} & \ar@{-} [d]^{} \\
\ar@{-} [r]_{g''(d)}& }} 
 \,  = \,
\xybiglabels \vcenter{\xymatrix@M=0pt@=3pc@C=5pc{\ar@{-} [d] _{} \ar@{-} [r]^{g(e)} \ar@{} [dr]|{(\eta'\eta)(e)} & \ar@{-} [d]^{} \\
\ar@{-} [r] |{g''(e)} \ar@{-} [d]_{} \ar@{}[dr] |{h''(f)} & \ar@{-} [d]^{} \\
\ar@{-} [r]_{g''(d)}& }} 
\end{equation*}
Composition is clearly associative and the identity morphisms obviously have the right properties.

\end{proof}

To conclude this subsection, we note that the squares (\ref{eq:conn-morphisms-edge}) have vertical inverses
\begin{equation} \label{eq:conn-morphisms-edge-inverse}
\xybiglabels \vcenter{\xymatrix@M=0pt@=3pc{\ar@{-} [d] _{} \ar@{-} [r]^{g'(e)} \ar@{} [dr]|{\eta(e)^{-v}} & \ar@{-} [d]^{} \\
\ar@{-} [r]_{g(e)}  & }}
\end{equation}
and hence we have:

\begin{lemma}
The category $\cat{Conn}$ is a groupoid.
\end{lemma}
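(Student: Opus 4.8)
The plan is to produce an explicit inverse for each morphism. Given a morphism $((g,h),\eta): (g,h) \to (g',h')$, let $\bar\eta : E \ra H$ be defined on each edge $e$ by declaring the square with top edge $g'(e)$, bottom edge $g(e)$ and centre $\bar\eta(e)$ to be the vertical inverse (\ref{eq:conn-morphisms-edge-inverse}) of the defining square (\ref{eq:conn-morphisms-edge}) of $\eta$; concretely $\bar\eta(e) = \eta(e)^{-1}$. I claim $((g',h'),\bar\eta)$ is the two-sided inverse of $((g,h),\eta)$.

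First I would verify that $((g',h'),\bar\eta)$ is a genuine morphism from $(g',h')$ to $(g,h)$. Its source is $(g',h')$ by definition; the point at issue is its target, say $(g'',h'')$, as determined by (\ref{eq:conn-morphisms-edge}) and (\ref{eq:conn-morphisms}). On each edge, combining $\partial(\bar\eta(e)) = g''(e)g'(e)^{-1}$ with $\partial(\eta(e)) = g'(e)g(e)^{-1}$ forces $g''(e) = g(e)$, and since the value on a composite 1-source or 1-target is the corresponding product of the edge values, $g'' = g$ throughout. For the face constraint I would note that (\ref{eq:conn-morphisms}), being an equality of squares with matching horizontal edges on both sides, is equivalent to the identity $\eta(d)\,h(f) = h'(f)\,\eta(e)$ in $H$ for each face $f$, where $\eta(d)$ and $\eta(e)$ denote the centres of the horizontally composed squares along the 1-target $d$ and 1-source $e$. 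Because vertical inversion commutes with horizontal composition of squares and with the horizontal inverses used for reversed edges, the square $\bar\eta(d)$ has centre $\eta(d)^{-1}$ and $\bar\eta(e)$ has centre $\eta(e)^{-1}$, so the face constraint for $((g',h'),\bar\eta)$ reads $\eta(d)^{-1}h'(f) = h''(f)\,\eta(e)^{-1}$; substituting $h'(f) = \eta(d)\,h(f)\,\eta(e)^{-1}$ gives $h'' = h$. Hence $((g',h'),\bar\eta)$ is indeed a morphism $(g',h') \to (g,h)$.

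With this in hand the conclusion is immediate from the composition rule (\ref{eq:conn-composition}) and the identities (\ref{eq:conn-identities}). Since the target of $((g,h),\eta)$ is the source of $((g',h'),\bar\eta)$, the composite $((g',h'),\bar\eta)\circ((g,h),\eta)$ is defined and equals $((g,h),\bar\eta\,\eta)$; but $(\bar\eta\,\eta)(e) = \eta(e)^{-1}\eta(e) = 1_H$ for every edge, so this is $((g,h),1) = id_{(g,h)}$. Symmetrically, the target of $((g',h'),\bar\eta)$ being $(g,h)$, the composite $((g,h),\eta)\circ((g',h'),\bar\eta)$ is defined and equals $((g',h'),\eta\,\bar\eta) = ((g',h'),1) = id_{(g',h')}$. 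Thus every morphism is invertible and $\cat{Conn}$ is a groupoid. I expect the only genuine work to be in the middle paragraph, namely confirming that the target of the candidate inverse is exactly $(g,h)$: this rests on the compatibility of vertical inversion with the horizontal composition of squares used to extend the constraints (\ref{eq:conn-objects}) and (\ref{eq:conn-morphisms}) from single edges to the composite 1-sources and 1-targets of faces, whereas everything else is a direct application of (\ref{eq:conn-composition}).
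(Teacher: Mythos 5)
Your proposal is correct and follows essentially the same route as the paper: the inverse of $((g,h),\eta)$ is built edgewise from the vertical inverse squares (\ref{eq:conn-morphisms-edge-inverse}), and the face condition (\ref{eq:conn-morphisms}) for this candidate inverse is deduced from the face condition for $((g,h),\eta)$ by cancelling with vertical inverses (you do this algebraically via $h'(f)=\eta(d)h(f)\eta(e)^{-1}$, the paper diagrammatically), after which the composites are identities by (\ref{eq:conn-composition}). Your explicit check that the composite $1$-source/$1$-target squares of $\bar\eta$ are the vertical inverses of those of $\eta$ is a point the paper leaves implicit, so no discrepancy in substance.
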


\begin{proof}
The properties of the vertical inverse ensure that each morphism of $\cat{Conn}$ has an inverse given by
(\ref{eq:conn-morphisms-edge-inverse}). This is well defined, since the condition corresponding to 
(\ref{eq:conn-morphisms}) is derived from (\ref{eq:conn-morphisms}) itself by composing vertically on both sides with
(\ref{eq:conn-morphisms-edge-inverse}) at the top, and composing with the corresponding square (\ref{eq:conn-morphisms-edge-inverse}) for $d$ at the bottom.
\end{proof}

\subsection{The 2-Group \texorpdfstring{$\cat{Gauge}$}{} and its Action on \texorpdfstring{$\cat{Conn}$}{}}
\label{sec:2group-action}

With a 2-group $\G$ and discretized manifold $(M, {\cal D})$ as before, the 2-group of gauge transformations is described by data assigned only to the 0-cells $V$, as follows.
\begin{definition}
\label{def:2gp-gauge}
The 2-group of gauge transformations, $\cat{Gauge}$, regarded as a categorical group, is given as follows:
  \begin{itemize}
  \item \textbf{Objects} are the set of  maps $\gamma : V \ra G$

 \item \textbf{Morphisms} are the set of pairs $(\gamma,\chi)$ where $\gamma$ is an object and $\chi : V \ra H$. The source and target of $(\gamma,\chi)$ are $s(\gamma,\chi)=\gamma$ and $t(\gamma,\chi)=\gamma'$, where
\begin{equation}
\xybiglabels \vcenter{\xymatrix@M=0pt@=3pc{\ar@{-} [d] _{} \ar@{-} [r]^{\gamma(v)} \ar@{} [dr]|{\chi(v)} & \ar@{-} [d]^{} \\
\ar@{-} [r]_{\gamma'(v)}  & }}
\label{eq:gaugemor}
\end{equation}
for each $v\in V$. 

 \item \textbf{Composition} of morphisms $(\gamma,\chi)$ and $(\gamma',\chi')$ is given pointwise by vertical composition of the squares (\ref{eq:gaugemor}). 

\item \textbf{Identities}  are the morphisms with $\chi(v)=1,\, \forall v$. 

\item The \textbf{monoidal structure} of the categorical group is given  pointwise by horizontal composition of the squares (\ref{eq:gaugemor}).  

  \end{itemize}
\end{definition}

It is clear that $\cat{Gauge}$ is a well-defined categorical group, since it the product over $V$ of the categorical group $\G$.

In \cite{morton-picken-i} we defined the general notion of the action of a 2-group on a 
category\footnote{Note that the action of a 2-group $\G$ on a category could also be regarded
as the action of the associated double groupoid 
\cite{brown-higgins-sivera,martinspickenii, morton-picken-i} ${\cal D}(\G)$ 
on the category.
In this context we would like to mention that there is a notion of action 
of a double groupoid on morphisms of groupoids \cite[Def. 1.5]{brown-mackenzie}, but this is in the sense of a groupoid action
on a map \cite[Def. 2.1]{higgins-mackenzie}, so a somewhat different perspective to ours. We are grateful to Ronnie Brown
for drawing our attention to this point.}. 
This can be done at the 2-categorical or at the categorical level; here we choose the categorical level, which is Definition 3.3 of \cite{morton-picken-i}. In short, a (strict) action is a functor such that an action diagram commutes and a unit condition holds. 

 \begin{definition}
  A strict action of a categorical group $\G$ on a category $\C$ is a
  functor $\hat{\Phi}:\G\times \C \rightarrow \C$ 
  satisfying the action square diagram 
  in $\cat{Cat}$ (strictly):
\begin{equation}\label{eq:catactioncondition-diag}
  \xymatrix@C=+5pc{
    \G \times \G \times \C \ar[r]^{\otimes \times Id_{\C}} \ar[d]_{Id_{\G} \times \hat{\Phi}} & \G \times \C \ar[d]^{\hat{\Phi}} \\
    \G \times \C \ar[r]_{\hat{\Phi}} & \C
  }
\end{equation}
and the unit condition 
\begin{equation}\label{eq:Phi-unit-condition}
 \hat{\Phi}(1,x) =x, \quad 
\hat{\Phi}(id_1,f) = f  
\end{equation}
for all objects $x$ and morphisms $f$ of $\C$
\label{def:2grp_action-Phi-hat}
\end{definition}
\begin{remark}
Note that in \cite{morton-picken-i} we omitted the unit condition on morphisms. Definition \ref{def:2grp_action-Phi-hat} may be regarded as defining  a ``categorified action'', since at the object level  (\ref{eq:catactioncondition-diag}) and (\ref{eq:Phi-unit-condition}) translate to the usual conditions $g_1.(g_2.x)=(g_1g_2).x$ and $1.x=x$ for a group action (where $g.x$ denotes $\hat{\Phi}(g,x)$).
\end{remark}

For the present case, we define the following action.

 \begin{definition}
The 2-group of gauge transformations acts on the groupoid of connections by:
\begin{itemize}
  \item on objects $\hat{\Phi}(\gamma, (g,h))=(\gamma . g, \gamma . h)$ where, for any edge $e\in E$ from $v$ to $w$,
  \begin{equation}
\label{eq:Phi-ob-edge}
(\gamma.g)(e) := \gamma(v)g(e)\gamma(w)^{-1},
  \end{equation}
and for any face with bigon structure $f\in F$:
\begin{equation}\label{eq:Phi-ob-face}
\xybiglabels \vcenter{\xymatrix@M=0pt@=3pc@C=4pc{\ar@{-} [d] \ar@{-} [r]^{(\gamma. g)(e)} \ar@{} [dr]|{(\gamma.h)(f)} & \ar@{-} [d] \\
\ar@{-} [r]_{(\gamma. g)(d)}  & }}
\, = \,
\xybiglabels \vcenter{\xymatrix @=3pc @W=0pc @M=0pc { \ar@{-}[r] ^{\gamma(v)} \ar@{-}[d]_{} \ar@{}[dr]|{} & \ar@{-}[r] ^{g(e)} \ar@{-}[d]|{}
\ar@{}[dr]|{h(f)} & \ar@{-}[r] ^{\gamma(w)^{-1}} \ar@{-}[d]^{}  \ar@{}[dr]|{} & \ar@{-}[d]^{}
\\ \ar@{-}[r] _{\gamma(v)} & \ar@{-}[r] _{g(d)} & \ar@{-}[r] _{\gamma(w)^{-1}} &
}}  
\end{equation}

  \item on morphisms $\hat{\Phi}((\gamma,\chi), ((g,h),\eta))=((\gamma.g, \gamma.h), (\gamma,\chi).\eta)$ where, for any edge $e\in E$ from $v$ to $w$,
  \begin{equation}
\label{eq:Phimor}
    \xybiglabels \vcenter{\xymatrix@M=0pt@=3pc@C=5pc{\ar@{-} [d]  \ar@{-} [r]^{(\gamma.g)(e)} \ar@{} [dr]|{((\gamma,\chi).\eta)(e)} & \ar@{-} [d] \\ \ar@{-} [r]_{(\gamma.g)'(e)}  & }}
    \,  = \,
    \xybiglabels \vcenter{\xymatrix @=3pc @W=0pc @M=0pc { \ar@{-}[r] ^{\gamma(v)} \ar@{-}[d]_{} \ar@{}[dr]|{\chi(v)} & \ar@{-}[r] ^{g(e)} \ar@{-}[d]|{}
        \ar@{}[dr]|{\eta(e)} & \ar@{-}[r] ^{\gamma(w)^{-1}} \ar@{-}[d]^{}  \ar@{}[dr]|{\chi(w)^{-h}} & \ar@{-}[d]^{}
        \\ \ar@{-}[r] _{\gamma'(v)} & \ar@{-}[r] _{g'(e)} & \ar@{-}[r] _{\gamma'(w)^{-1}} &
      }}.
  \end{equation}

\end{itemize}
\label{def:Phi-hat-for-Conn+Gauge}
\end{definition}

\begin{remark} 
Equation (\ref{eq:Phimor}) could also be viewed from a 3D perspective as a commuting 3-cube. An instance of this is displayed in Figure  \ref{fig:3D-circle} for the example when $M$ is a circle.

In \cite{morton-picken-i} we used the symbol $\act$ as a shorthand for the action in the general case (see Def. 3.5  of \cite{morton-picken-i}). Thus for instance the unit condition (\ref{eq:Phi-unit-condition}) may be written: $1\act x=x,\, id_1 \act f = f $. Here we are writing the action for our specific case using the symbol $.$, i.e. $(\gamma.g,\gamma.h)$ etc.
\label{rem:3Daction}
\end{remark}

\begin{remark}
The relation between our 2-group of gauge transformations $\cat{Gauge}$ and the full gauge transformations of \cite[Def. 86, Fig. 7, Rem. 89]{bullivant-et-al} is as follows 
(recall Remark \ref{rem:fullgt1}). Our action of $\cat{Gauge}$ on $\cat{Conn}$ at the object level is given by assignments of $G$ elements to the vertices of $M$, and is thus a special case of full gauge transformations, obtained by trivializing the assignments of $H$ elements to edges.  Thus from our perspective, full gauge transformations combine the morphisms of $\cat{Conn}$ and the action of $\cat{Gauge}$ on $\cat{Conn}$ at the object level. For the purpose of their study of topological phases of matter in 3+1 dimensions, Bullivant et al do not need to introduce the higher gauge transformations corresponding in our terms to the action of  $\cat{Gauge}$ on $\cat{Conn}$ at the morphism level, but they do suggest an algebraic framework for these as 2-fold homotopies between crossed module homotopies
 \cite[Rem. 94]{bullivant-et-al}.
\label{rem:fullgt2}
\end{remark}

The central result of this article is:

\begin{theorem} The action $\hat{\Phi}$ of the previous definition is a well-defined action of $\cat{Gauge}$ on $\cat{Conn}$.
\label{thm:main-action}
\end{theorem}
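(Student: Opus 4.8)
The plan is to verify the three conditions that Definition~\ref{def:2grp_action-Phi-hat} demands of a strict action: that $\hat{\Phi}$ is a well-defined functor $\cat{Gauge}\times\cat{Conn}\to\cat{Conn}$, that the action square~\eqref{eq:catactioncondition-diag} commutes strictly, and that the unit conditions~\eqref{eq:Phi-unit-condition} hold. In every case the strategy is the same: rewrite the required equality as an identity between pastings of squares in the 2D calculus of Section~\ref{sec:crossedmodules} and deduce it from associativity of horizontal and of vertical composition, the interchange law, and the inverse-of-a-composite identities~\eqref{eq:hor-inverse-hor-prod}--\eqref{eq:ver-inverse-ver-prod}. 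The latter are needed because the 1-source and 1-target of a face, and more generally any path in the 1-skeleton, may traverse its constituent edges in either orientation; one checks once and for all that $(\gamma.g)$ and $(\gamma,\chi).\eta$ telescope correctly along such a path, the interior factors $\gamma(v)^{\pm1}$, resp.\ the interior squares $\chi(v)^{\pm h}$, cancelling (using the both-sidedness of horizontal inverses), so that~\eqref{eq:Phi-ob-edge} and~\eqref{eq:Phimor} give a value independent of how the path is subdivided.

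For well-definedness on objects one needs $(\gamma.g,\gamma.h)$ to satisfy~\eqref{eq:conn-objects}, and this is automatic: by~\eqref{eq:Phi-ob-face} the square $(\gamma.h)(f)$ is the horizontal composite of the $h(f)$-square with two flanking squares built from $\gamma(v)$ and $\gamma(w)^{-1}$, both carrying trivial $H$-label and hence boundary $1_G$, and since horizontal composition preserves the relation $\partial(\eta)=g'g^{-1}$, the composite has boundary $(\gamma.g)(d)(\gamma.g)(e)^{-1}$. The heart of the matter is well-definedness on morphisms: one must show $\bigl((\gamma.g,\gamma.h),(\gamma,\chi).\eta\bigr)$ satisfies the face condition~\eqref{eq:conn-morphisms}, with target the object $\hat{\Phi}(\gamma',(g',h'))=(\gamma'.g',\gamma'.h')$. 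I would prove this by substituting~\eqref{eq:Phi-ob-face} and~\eqref{eq:Phimor} into both sides of~\eqref{eq:conn-morphisms} and applying the interchange law to rewrite each side as a horizontal composite of three columns: the two outer columns are vertical composites of a degenerate $\gamma$-square (equal top and bottom edge) with a $\chi$-square and therefore collapse to the $\chi$-square alone, while the middle column is precisely the face condition~\eqref{eq:conn-morphisms} for the original morphism $((g,h),\eta)$; reassembling the two sides gives the identity. Functoriality is then routine: composition in each factor of $\cat{Gauge}\times\cat{Conn}$ is vertical composition of squares and $\hat{\Phi}$ on morphisms is assembled from those same squares, so by the interchange law together with~\eqref{eq:hor-inverse-ver-prod} it respects composition; it sends identities to identities since $\chi=1$, $\eta=1$ produce trivial squares.

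For the action square, on objects~\eqref{eq:Phi-ob-edge} gives $((\gamma_1\gamma_2).g)(e)=\gamma_1(v)\gamma_2(v)g(e)\gamma_2(w)^{-1}\gamma_1(w)^{-1}=(\gamma_1.(\gamma_2.g))(e)$, and on faces the strip~\eqref{eq:Phi-ob-face} for $\gamma_1\gamma_2$ coincides, by associativity of horizontal composition, with the nested strip flanking $h(f)$ first by $\gamma_2$ and then by $\gamma_1$. On morphisms the same associativity argument applies to~\eqref{eq:Phimor}, once one uses that the monoidal product of $\cat{Gauge}$ (Definition~\ref{def:2gp-gauge}) is pointwise horizontal composition of the squares~\eqref{eq:gaugemor}, so that the $\chi$-square of $\gamma_1\otimes\gamma_2$ at a vertex is the horizontal composite of the $\chi_1$- and $\chi_2$-squares (and at the target vertex one invokes~\eqref{eq:hor-inverse-hor-prod} for the horizontal inverse). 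The unit conditions are the degenerate case $\gamma\equiv 1_G$, $\chi\equiv 1_H$: the flanking squares in~\eqref{eq:Phi-ob-face} and~\eqref{eq:Phimor} then have trivial labels and act as units for horizontal composition, giving $\hat{\Phi}(1,(g,h))=(g,h)$ and $\hat{\Phi}(id_1,((g,h),\eta))=((g,h),\eta)$.

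The step I expect to be the real obstacle is the one in the middle of the second paragraph, the face condition~\eqref{eq:conn-morphisms} for the image of a morphism, i.e.\ that $\hat{\Phi}$ preserves the ``commuting bigon cylinder'' of Remark~\ref{rem:bigon-cyl} and Figure~\ref{fig:comm-bigon-cyl}. This is the only point where the bigon structure on the faces and the interchange law genuinely interact, and where one must take care that the degenerate $\gamma$-columns and the $\chi$-columns slide coherently through a horizontal composite of squares of differing heights; everything else is bookkeeping with associativity and with the inverse-of-a-composite identities.
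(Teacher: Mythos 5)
Your proposal is correct and follows essentially the same route as the paper: the paper likewise splits the proof into well-definedness on morphisms (the column-collapse argument via the interchange law, with the outer degenerate $\gamma$-columns absorbing the $\chi$-squares and the middle column being (\ref{eq:conn-morphisms}) for $((g,h),\eta)$), functoriality by vertical composition of the arrays in (\ref{eq:Phimor}), the action square via the five-column array composed in two ways using the horizontal-inverse identities, and the trivial unit check. Your explicit remark on telescoping along multi-edge 1-sources/1-targets is a detail the paper leaves implicit, but it does not change the argument.
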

We will divide the proof up into smaller lemmas. Apart from showing that $\hat{\Phi}$ is well-defined on morphisms, we need to show that it is a functor, that it satisfies the action square diagram and that it satisfies the unit condition.
\begin{lemma}
\label{lem:well-defined}
$\hat{\Phi}$  is well-defined on morphisms.
\end{lemma}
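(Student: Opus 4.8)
The plan is to show that the assignment $\hat{\Phi}((\gamma,\chi),((g,h),\eta))$ defined by (\ref{eq:Phimor}) actually produces a legitimate morphism of $\cat{Conn}$, i.e.\ that the pair $((\gamma.g,\gamma.h),(\gamma,\chi).\eta)$ satisfies the defining relation (\ref{eq:conn-morphisms}) for a morphism, with the correct source $(\gamma.g,\gamma.h)=\hat{\Phi}(\gamma,(g,h))$ and with a target $((\gamma.g)',(\gamma.h)')$ that coincides with $\hat{\Phi}(\gamma',(g',h'))$ where $(g',h')$ is the target of $((g,h),\eta)$ and $\gamma'=t(\gamma,\chi)$. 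There are really three things to check: first, that the square (\ref{eq:Phimor}) is a valid square of $\G$ (its boundary condition $\partial$ holds); second, that when the component edges of a composite $1$-cell $e$ or $d$ are multiplied horizontally (with horizontal inverses for reversed edges), the result is consistent — i.e.\ the formula is independent of how we decompose $e$ into edges, which follows from associativity of horizontal composition and properties (\ref{eq:hor-inverse-hor-prod})--(\ref{eq:hor-inverse-ver-prod}); and third, the face condition (\ref{eq:conn-morphisms}) for $(\gamma.g,\gamma.h)$, $(\gamma,\chi).\eta$, and the target.

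The key computation is the third point. First I would establish, for a single edge $e$ from $v$ to $w$, that the right-hand side of (\ref{eq:Phimor}) is a well-formed square: the top edge is $\gamma(v)g(e)\gamma(w)^{-1}=(\gamma.g)(e)$, the bottom edge is $\gamma'(v)g'(e)\gamma'(w)^{-1}$, and the interior label $\chi(v)\bigl(\gamma(v)\rhd\eta(e)\bigr)\bigl((\gamma(v)g(e))\rhd\chi(w)^{-1}\bigr)$ — computed by the horizontal composition rule — has boundary $\partial$ equal to (bottom)(top)$^{-1}$, using (\ref{cm1}), the fact that $\partial\chi(v)=\gamma'(v)\gamma(v)^{-1}$, $\partial\eta(e)=g'(e)g(e)^{-1}$, and $\partial\chi(w)=\gamma'(w)\gamma(w)^{-1}$. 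This simultaneously identifies the target's edge assignment as $(\gamma.g)'(e)=\gamma'(v)g'(e)\gamma'(w)^{-1}=(\gamma'.g')(e)$, so the target is $\hat{\Phi}(\gamma',(g',h'))$ on edges. For the face condition, I would take the defining $3\times 3$ array picture: stack vertically, for the $1$-source $e$ and the $1$-target $d$ of a face $f$, the horizontal product of three squares (the $\chi(v)$/$\chi(w)^{-h}$ "collar" squares on the sides and the $\eta$ or $h$ square in the middle), and then invoke the interchange law to reorganize the resulting $2$-row-by-$3$-column array. Using the original face relation (\ref{eq:conn-morphisms}) for $((g,h),\eta)$ in the middle column, and the fact that the side collars built from $\chi(v)$ and $\chi(w)$ only depend on the vertices (which are shared by $e$ and $d$ since they have the same endpoints $v,w$), the two sides of (\ref{eq:conn-morphisms}) for the image data match. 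This also pins down $(\gamma.h)'(f)=(\gamma'.h')(f)$, confirming the target is exactly $\hat{\Phi}(\gamma',(g',h'))$ on faces as well.

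The main obstacle is bookkeeping rather than conceptual: one must be careful that the "collar" squares attached on the left and right of a composite $1$-cell interact correctly with the horizontal composition used to assemble $\eta(e)$ and $\eta(d)$ from their component edges, since a composite $1$-source and $1$-target share only their extreme endpoints $v$ and $w$, not the interior vertices — so the interior collar contributions must telescope away. I expect this to follow cleanly from the interchange law together with (\ref{eq:hor-inverse-hor-prod}), but it is the step where the square calculus must be applied most carefully, and where an explicit diagram (analogous to Figure \ref{fig:comm-bigon-cyl}) is worth including. The functoriality, action-square, and unit conditions are deferred to subsequent lemmas, so here I only need the "well-defined morphism" claim, and in particular that $\hat\Phi$ sends identities-at-the-source consistently, which is immediate since setting $\chi\equiv 1$ and $\eta\equiv 1$ collapses (\ref{eq:Phimor}) to the identity square on $(\gamma.g)(e)$.
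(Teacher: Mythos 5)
Your proposal follows essentially the same route as the paper: it reduces the face condition for the acted-upon morphism to a $2\times 3$ array of squares with the $\chi(v)$ and $\chi(w)^{-h}$ collars on the sides and the $h(f)$/$\eta$ squares in the middle, then uses the interchange law, the vertical identity squares in the outer columns, and relation (\ref{eq:conn-morphisms}) for $((g,h),\eta)$ in the middle column. Your extra remarks on the boundary condition of the square (\ref{eq:Phimor}) and on the telescoping of interior collars for composite $1$-sources and $1$-targets are correct points that the paper leaves implicit, so the argument is sound.
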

\begin{proof}
We need to show that $\hat{\Phi}((\gamma,\chi), ((g,h),\eta))$ is a well-defined morphism in $\cat{Conn}$, satisfying (\ref{eq:conn-morphisms}) for each face, i.e.:
\begin{equation*}
\xybiglabels \vcenter{\xymatrix@M=0pt@=3pc@C=5pc{\ar@{-} [d] _{} \ar@{-} [r]^{(\gamma.g)(e)} \ar@{} [dr]|{(\gamma.h)(f)} & \ar@{-} [d]^{} \\
\ar@{-} [r] |{(\gamma.g)(d)} \ar@{-} [d]_{} \ar@{}[dr] |{((\gamma,\chi).\eta)(d)} & \ar@{-} [d]^{} \\
\ar@{-} [r]_{(\gamma.g)'(d)}& }} 
 \,  = \,
\xybiglabels \vcenter{\xymatrix@M=0pt@=3pc@C=5pc{\ar@{-} [d] _{} \ar@{-} [r]^{(\gamma.g)(e)} \ar@{} [dr]|{((\gamma,\chi).\eta)(e)} & \ar@{-} [d]^{} \\
\ar@{-} [r] |{(\gamma.g)'(e)} \ar@{-} [d]_{} \ar@{}[dr] |{(\gamma.h)'(f)} & \ar@{-} [d]^{} \\
\ar@{-} [r]_{(\gamma.g)'(d)}& }} 
\end{equation*}

This is the horizontal composition of the equation
$$
\xybiglabels \vcenter{\xymatrix @=3pc @W=0pc @M=0pc @C=4pc { \ar@{-}[r] ^{\gamma(v)} \ar@{-}[d]_{} \ar@{}[dr]|{} & \ar@{-}[r] ^{g(e)} \ar@{-}[d]|{} \ar@{}[dr]|{h(f)} & 
\ar@{-}[r] ^{\gamma(w)^{-1}} \ar@{-}[d]^{}  \ar@{}[dr]|{} & \ar@{-}[d]^{}
\\ \ar@{-}[r] |{\gamma(v)} \ar@{-}[d]_{} \ar@{}[dr]|{\chi(v)} & \ar@{-}[r] |{g(d)} \ar@{-}[d]|{} \ar@{}[dr]|{\eta(d)}  & \ar@{-}[r] |{\gamma(w)^{-1}} \ar@{-}[d]^{}  \ar@{}[dr]|{\chi(w)^{-h}} & \ar@{-}[d]^{}
\\ \ar@{-}[r] _{\gamma'(v)} & \ar@{-}[r] _{g'(d)} & \ar@{-}[r] _{\gamma'(w)^{-1}} &
}}
\, = \,
\xybiglabels \vcenter{\xymatrix @=3pc @W=0pc @M=0pc @C=4pc { \ar@{-}[r] ^{\gamma(v)} \ar@{-}[d]_{} \ar@{}[dr]|{\chi(v)} & \ar@{-}[r] ^{g(e)} \ar@{-}[d]|{} \ar@{}[dr]|{\eta(e)} & 
\ar@{-}[r] ^{\gamma(w)^{-1}} \ar@{-}[d]^{}  \ar@{}[dr]|{\chi(w)^{-h}} & \ar@{-}[d]^{}
\\ \ar@{-}[r] |{\gamma'(v)} \ar@{-}[d]_{} \ar@{}[dr]|{} & \ar@{-}[r] |{g'(e)} \ar@{-}[d]|{} \ar@{}[dr]|{h'(f)}  & \ar@{-}[r] |{\gamma'(w)^{-1}} \ar@{-}[d]^{}  \ar@{}[dr]|{} & \ar@{-}[d]^{}
\\ \ar@{-}[r] _{\gamma'(v)} & \ar@{-}[r] _{g'(d)} & \ar@{-}[r] _{\gamma'(w)^{-1}} &
}}
$$
which holds since the three vertical compositions are equal on either side of the equation, the outer ones because of the properties of vertical identity squares, and the middle one since it is (\ref{eq:conn-morphisms}) for the morphism $((g,h),\eta)$.
\end{proof}

\begin{lemma} $\hat{\Phi}$  is functorial.
\label{lem:functor}
\end{lemma}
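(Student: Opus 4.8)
The plan is to verify that $\hat{\Phi}$ preserves composition and identities, just as in the proof of Theorem~\ref{thm:conn}. Functoriality here means two things: first, that $\hat{\Phi}$ sends an identity morphism $id_{(g,h)} = ((g,h),1)$ to the identity morphism on $\hat{\Phi}(\gamma,(g,h)) = (\gamma.g, \gamma.h)$; and second, that for composable morphisms in $\cat{Gauge}\times\cat{Conn}$ — that is, a pair $(\gamma',\chi')\circ(\gamma,\chi)$ in $\cat{Gauge}$ together with a pair $((g',h'),\eta')\circ((g,h),\eta)$ in $\cat{Conn}$ — we have $\hat{\Phi}$ of the composite equals the composite of the $\hat{\Phi}$'s.

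For the identity condition, I would plug $\chi(v)=1$ for all $v$ and $\eta(e)=1$ for all $e$ into equation~(\ref{eq:Phimor}). The right-hand side then becomes a horizontal composite of three squares, the outer two of which are trivial (identity $H$-label, since $\chi(v)^{-h}$ of $1_H$ is $1_H$) and the middle of which is trivial ($\eta(e)=1$); using the horizontal composition formula and the fact that $1_G \rhd 1_H = 1_H$, this collapses to the identity square on $(\gamma.g)(e)$, so $((\gamma,\chi).\eta)(e) = 1$, which is exactly $id_{(\gamma.g,\gamma.h)}$. For the composition condition, recall that composition in $\cat{Gauge}$ is pointwise vertical composition of the squares~(\ref{eq:gaugemor}), composition in $\cat{Conn}$ is $\eta'\eta$ with $(\eta'\eta)(e)=\eta'(e)\eta(e)$, and composition in $\cat{Gauge}\times\cat{Conn}$ is componentwise. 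So I need to show $((\gamma'\!\circ\!\gamma,\ \chi'\chi).(\eta'\eta))(e)$ — obtained by stacking the defining square~(\ref{eq:Phimor}) for $(\gamma',\chi')$ on top of that for $(\gamma,\chi)$ vertically — equals the square~(\ref{eq:Phimor}) for the composite data. This is a $3\times 2$ array of squares: columns indexed by $\{\gamma(v) \text{ or } \gamma'(v)\}$, $\{g(e),g'(e),g''(e)\}$-type labels, $\{\gamma(w)^{-1}\}$, and two rows stacked vertically. By the interchange law, evaluating column-by-column (vertical composites within each column) and then horizontally gives the same single square as evaluating row-by-row (horizontal composites~(\ref{eq:Phimor})) and then vertically. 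The left column vertical composite gives the $\chi'(v)\chi(v)$ square, the right column gives $(\chi'(w)\chi(w))^{-h}$ via~(\ref{eq:hor-inverse-ver-prod}), and the middle gives $\eta'(e)\eta(e)$; matching this against the definition of $\hat{\Phi}$ applied to the composite completes the argument.

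The main obstacle, and the only place requiring genuine care, is the bookkeeping on the right-hand column: one must check that the vertical composite of the squares labelled $\chi(w)^{-h}$ (for the $(\gamma,\chi)$ factor) and $\chi'(w)^{-h}$ (for the $(\gamma',\chi')$ factor), in the correct order, equals $(\chi'(w)\chi(w))^{-h}$ — i.e. the horizontal inverse of the vertical composite that defines $t$ of the composed morphism in $\cat{Gauge}$. This is precisely the content of identity~(\ref{eq:hor-inverse-ver-prod}) (with appropriate relabelling), so invoking it directly avoids any hand computation; one just has to be careful that the order of vertical stacking on the two sides matches, since vertical composition is not commutative. Everything else follows from associativity and the interchange law for the square calculus, exactly as the well-definedness of composition in $\cat{Conn}$ followed in Theorem~\ref{thm:conn}. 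I would also remark in passing that one should check the target object of $\hat{\Phi}$ of the composite agrees, but this is immediate since the target is computed from the bottom edges of the array, which are the same whichever way the array is evaluated.
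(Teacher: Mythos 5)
Your proposal is correct and follows essentially the same route as the paper: check identities by setting $\chi$ and $\eta$ to $1$ in (\ref{eq:Phimor}), and check composition by stacking the two defining $1\times 3$ arrays vertically and evaluating either row-by-row or column-by-column via the interchange law. Your explicit appeal to (\ref{eq:hor-inverse-ver-prod}) for the right-hand column (giving $(\chi'(w)\chi(w))^{-h}$, matching the composite $(\gamma,\chi'\chi)$ in $\cat{Gauge}$) is exactly the bookkeeping the paper leaves implicit in the phrase ``multiplying vertically gives the right hand side.''
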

\begin{proof}
We need to show that $\hat{\Phi}$ preserves compositions and identities. For compositions this means showing the equation:
$$
\hat{\Phi}((\gamma',\chi'), ((g',h'),\eta')) \circ  \hat{\Phi}((\gamma,\chi), ((g,h),\eta)) 
= \hat{\Phi}((\gamma,\chi'\chi), ((g,h),\eta'\eta)).
$$
This holds, since the left hand side corresponds to:
$$
\xybiglabels \vcenter{\xymatrix @=3pc @W=0pc @M=0pc @C=4pc { \ar@{-}[r] ^{\gamma(v)} \ar@{-}[d]_{} \ar@{}[dr]|{\chi(v)} & \ar@{-}[r] ^{g(e)} \ar@{-}[d]|{} \ar@{}[dr]|{\eta(e)} & 
\ar@{-}[r] ^{\gamma(w)^{-1}} \ar@{-}[d]^{}  \ar@{}[dr]|{\chi(w)^{-h}} & \ar@{-}[d]^{}
\\ \ar@{-}[r] |{\gamma'(v)} \ar@{-}[d]_{} \ar@{}[dr]|{\chi'(v)} & \ar@{-}[r] |{g'(e)} \ar@{-}[d]|{} \ar@{}[dr]|{\eta'(e)}  & \ar@{-}[r] |{\gamma'(w)^{-1}} \ar@{-}[d]^{}  \ar@{}[dr]|{\chi'(w)^{-h}} & \ar@{-}[d]^{}
\\ \ar@{-}[r] _{\gamma''(v)} & \ar@{-}[r] _{g''(d)} & \ar@{-}[r] _{\gamma''(w)^{-1}} &
}}
$$
and multiplying vertically gives the right hand side. 

For identities we need to show
$$
\hat{\Phi}((\gamma,1), ((g,h),1)) = ((\gamma . g, \gamma . h), 1)
$$
which is obvious, setting $\chi(v)$, $\eta(e)$ and $\chi(w)$ equal to 1 in (\ref{eq:Phimor}).
\end{proof}

\begin{lemma} $\hat{\Phi}$  satisfies the action square diagram (\ref{eq:catactioncondition-diag}).
\label{lem:action-square}
\end{lemma}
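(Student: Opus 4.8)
The plan is to verify the action square diagram (\ref{eq:catactioncondition-diag}) separately on objects and on morphisms, since both the composite $\hat{\Phi}\circ(Id_{\G}\times\hat{\Phi})$ and the composite $\hat{\Phi}\circ(\otimes\times Id_{\C})$ are functors $\G\times\G\times\C\to\C$, and two functors agree iff they agree on objects and on morphisms. On objects, the left-hand path sends $(\gamma_1,\gamma_2,(g,h))$ to $\gamma_1.(\gamma_2.(g,h))$, while the right-hand path sends it to $(\gamma_1\gamma_2).(g,h)$, where $\gamma_1\gamma_2$ is the pointwise product in $\cat{Gauge}$ (its object-level monoidal structure). First I would check the edge formula (\ref{eq:Phi-ob-edge}): for an edge $e$ from $v$ to $w$ one computes
$$
\gamma_1(v)\bigl(\gamma_2(v)g(e)\gamma_2(w)^{-1}\bigr)\gamma_1(w)^{-1}
= (\gamma_1\gamma_2)(v)\,g(e)\,(\gamma_1\gamma_2)(w)^{-1},
$$
using only associativity in $G$ and the fact that $(\gamma_1\gamma_2)(v)=\gamma_1(v)\gamma_2(v)$. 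Then for faces I would check (\ref{eq:Phi-ob-face}): applying the $\gamma_2$-action produces a horizontal composite of squares (flanking the $h(f)$ square by trivial squares labelled $\gamma_2(v)$ and $\gamma_2(w)^{-1}$), and applying $\gamma_1$ flanks that in turn; by associativity and the unit law for horizontal composition of squares, the flanking trivial squares on each side fuse into single trivial squares labelled $(\gamma_1\gamma_2)(v)$ and $(\gamma_1\gamma_2)(w)^{-1}$, which is exactly $(\gamma_1\gamma_2).h(f)$.

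On morphisms the argument has the same shape but uses the full 2D square calculus of Section \ref{sec:crossedmodules}. Applying $\hat{\Phi}((\gamma_2,\chi_2),-)$ to $((g,h),\eta)$ and then $\hat{\Phi}((\gamma_1,\chi_1),-)$ to the result, formula (\ref{eq:Phimor}) gives, for an edge $e$ from $v$ to $w$, a horizontal composite of five squares: the trivial squares with $\chi_1(v)$, then $\chi_2(v)$, then the $\eta(e)$ square, then the $\chi_2(w)^{-h}$ square, then the $\chi_1(w)^{-h}$ square (the side $G$-labels matching up as required, by the object-level computation above). I would then invoke the horizontal composition rule $\eta_1(g_1\rhd\eta_2)$ — here with the left factors being the pure-$\chi$ squares whose side edges are trivial — together with (\ref{eq:hor-inverse-hor-prod}) for the inverse of a horizontal product, to collapse $\chi_1(v)\,(\gamma_1(v)\rhd\chi_2(v))$ into $(\chi_1\chi_2)(v)$ on the left and $\chi_2(w)^{-h}\,\chi_1(w)^{-h}$ into $((\chi_1\chi_2)(w))^{-h}$ on the right; this recovers precisely the expression (\ref{eq:Phimor}) for the action of $(\gamma_1\gamma_2,\chi_1\chi_2)$. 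Since composition in $\cat{Gauge}$ of the two morphisms $(\gamma_i,\chi_i)$ is exactly $(\gamma_1\gamma_2,\chi_1\chi_2)$ (its monoidal/object-level product, which is what $\otimes$ refers to in the diagram), this is what is needed.

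The main obstacle I expect is bookkeeping rather than conceptual: one must be careful that the side $G$-labels of all the intermediate squares genuinely match so that the horizontal composites are defined, and that the combination of $\chi$-data is governed by $\chi_1(v)(\gamma_1(v)\rhd\chi_2(v))$ — i.e. that the $\rhd$-twist coming from the nontrivial top edge $\gamma_1(v)$ of the $\chi_1(v)$ square is exactly the twist appearing in $\cat{Gauge}$'s own monoidal composition law (which is, by Definition \ref{def:2gp-gauge}, pointwise horizontal composition of the squares (\ref{eq:gaugemor})). Once one observes that these two twists coincide by construction, everything reduces to the interchange law and the inverse-of-a-product identities (\ref{eq:hor-inverse-hor-prod})--(\ref{eq:ver-inverse-ver-prod}) already recorded. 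I would present the object-level check in one short displayed computation, then the morphism-level check as a single chain of square-diagram equalities, flagging the use of interchange and (\ref{eq:hor-inverse-hor-prod}) at the step where the flanking squares fuse.
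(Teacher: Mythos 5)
Your proposal is correct and follows essentially the same route as the paper: checking the diagram on objects (edges by direct computation in $G$, faces by fusing the flanking squares) and on morphisms by evaluating the five-square horizontal array in two ways, with the key identification that the twist $\chi_1(v)\,(\gamma_1(v)\rhd\chi_2(v))$ is exactly the monoidal composite in $\cat{Gauge}$ and that the inverse side collapses via (\ref{eq:hor-inverse-hor-prod}). The only cosmetic difference is that the paper phrases the collapse as "compose the two outer pairs vs.\ the middle three" in a single displayed array, whereas you describe the same fusion step in words.
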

\begin{proof}
The commutativity of the action diagram (\ref{eq:catactioncondition-diag}) at the object level corresponds to the statement:
$$
((\tilde{\gamma} \gamma).g, (\tilde{\gamma} \gamma).h))= (\tilde{\gamma}.(\gamma.g), \tilde{\gamma}.(\gamma.h))
$$
for any $\tilde{\gamma}, \gamma: V\ra G$. For functions of edges $g:E\ra G$ this is immediate from (\ref{eq:Phi-ob-edge}). For functions of faces $h:F\ra H$, this follows from considering the array:
$$
\xybiglabels \vcenter{\xymatrix @=3pc @W=0pc @M=0pc { \ar@{-}[r] ^{\tilde{\gamma}(v)} \ar@{-}[d]_{} \ar@{}[dr]|{} & \ar@{-}[r] ^{\gamma(v)} \ar@{-}[d]_{} \ar@{}[dr]|{} & 
\ar@{-}[r] ^{g(e)} \ar@{-}[d]|{} \ar@{}[dr]|{h(f)} & 
\ar@{-}[r] ^{\gamma(w)^{-1}} \ar@{-}[d]^{}  \ar@{}[dr]|{} & \ar@{-}[r] ^{\tilde{\gamma}(w)^{-1}} \ar@{-}[d]^{}  \ar@{}[dr]|{} & \ar@{-}[d]^{}
\\ \ar@{-}[r] _{\tilde{\gamma}(v)} & \ar@{-}[r] _{\gamma(v)} & \ar@{-}[r] _{g(d)} & \ar@{-}[r] _{\gamma(w)^{-1}} & \ar@{-}[r] _{\tilde{\gamma}(w)^{-1}} &
}}
$$
and either composing horizontally the two squares on the left and the two squares on the right, or composing horizontally the three squares in the middle.

To show the commutativity of the action diagram (\ref{eq:catactioncondition-diag}) at the morphism level, we consider the monoidal product of two morphisms in $\cat{Gauge}$:
\begin{equation}
\xybiglabels \vcenter{\xymatrix @=3pc @W=0pc @M=0pc { \ar@{-}[r] ^{\tilde{\gamma}} \ar@{-}[d]_{} \ar@{}[dr]|{\tilde{\chi}} & \ar@{-}[r] ^{\gamma} \ar@{-}[d]|{}
\ar@{}[dr]|{\chi} & \ar@{-}[d]^{}
\\ \ar@{-}[r] _{\tilde{\gamma}'} & \ar@{-}[r] _{\gamma'} & 
}}
\label{eq:monoidal-prod}
\end{equation}
This commutativity then corresponds to the statement:
$$
(((\tilde{\gamma}\gamma).g,(\tilde{\gamma}\gamma).h),(\tilde{\chi}(\tilde{\gamma}\rhd\chi)).\eta)= ((\tilde{\gamma}.(\gamma.g),\tilde{\gamma}.(\gamma.h)), (\tilde{\gamma},\tilde{\chi}).((\gamma, \chi).\eta))
$$
The equality of the first component has already been established. The equality of the second component follows from considering the array:
$$
\xybiglabels \vcenter{\xymatrix @=3pc @W=0pc @M=0pc { \ar@{-}[r] ^{\tilde{\gamma}(v)} \ar@{-}[d]_{} \ar@{}[dr]|{\tilde{\chi}(v)} & \ar@{-}[r] ^{\gamma(v)} \ar@{-}[d]_{} \ar@{}[dr]|{\chi(v)} & 
\ar@{-}[r] ^{g(e)} \ar@{-}[d]|{} \ar@{}[dr]|{\eta(e)} & 
\ar@{-}[r] ^{\gamma(w)^{-1}} \ar@{-}[d]^{}  \ar@{}[dr]|{\chi(w)^{-h}} & \ar@{-}[r] ^{\tilde{\gamma}(w)^{-1}} \ar@{-}[d]^{}  \ar@{}[dr]|{\tilde{\chi}(w)^{-h}} & \ar@{-}[d]^{}
\\ \ar@{-}[r] _{\tilde{\gamma}'(v)} & \ar@{-}[r] _{\gamma'(v)} & \ar@{-}[r] _{g'(e)} & \ar@{-}[r] _{\gamma'(w)^{-1}} & \ar@{-}[r] _{\tilde{\gamma}'(w)^{-1}} &
}}
$$
and either composing horizontally the two squares on the left and the two squares on the right, and using properties of the horizontal inverse, or composing horizontally the three squares in the middle.
\end{proof}

\begin{lemma} $\hat{\Phi}$  satisfies the unit condition.
\label{lem:unit-cond}
\end{lemma}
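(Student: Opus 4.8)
The plan is to verify the two halves of the unit condition (\ref{eq:Phi-unit-condition}) by direct substitution into Definition \ref{def:Phi-hat-for-Conn+Gauge}. Recall that the monoidal unit object $1$ of $\cat{Gauge}$ is the constant map $v \mapsto 1_G$, and that its identity morphism $id_1$ is the pair $(\gamma,\chi)$ with $\gamma(v) = 1_G$ and $\chi(v) = 1_H$ for every $v \in V$. So the task reduces to feeding these trivial assignments into the formulas (\ref{eq:Phi-ob-edge}), (\ref{eq:Phi-ob-face}) and (\ref{eq:Phimor}) and checking that $\hat{\Phi}$ acts as the identity.

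First I would treat the object part, $\hat{\Phi}(1,(g,h)) = (g,h)$. On edges, (\ref{eq:Phi-ob-edge}) immediately gives $(\gamma.g)(e) = 1_G\, g(e)\, 1_G^{-1} = g(e)$. On faces, the outer two columns of the array on the right of (\ref{eq:Phi-ob-face}) become trivial squares --- all edge labels $1_G$, centre label $1_H$ --- and since horizontal composition with a trivial square is the identity operation, the array evaluates to the single square with top $g(e)$, bottom $g(d)$, centre $h(f)$; hence $(\gamma.h)(f) = h(f)$. For the morphism part, $\hat{\Phi}(id_1, ((g,h),\eta)) = ((g,h),\eta)$, the object component is already done, so it remains to substitute $\gamma(v) = \gamma(w) = 1_G$ and $\chi(v) = \chi(w) = 1_H$ into the right-hand side of (\ref{eq:Phimor}); the left column is a trivial square and the right column is the horizontal inverse of a trivial square, which is again trivial, so the array collapses to the middle square with centre $\eta(e)$, giving $((\gamma,\chi).\eta)(e) = \eta(e)$.

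I do not anticipate any genuine obstacle here; the only point that deserves a word of care is that the composite edges $e$ and $d$ bounding a face may be concatenations of several oriented edges, so the edge-level computations above should be read component by component, using that the horizontal inverse of a trivial square is trivial. Once that is noted, the face-level conditions needed for $\hat{\Phi}(id_1, ((g,h),\eta))$ to be a legitimate morphism of $\cat{Conn}$ are inherited directly from those already satisfied by $((g,h),\eta)$ itself. Together with Lemmas \ref{lem:well-defined}, \ref{lem:functor} and \ref{lem:action-square}, this completes the proof of Theorem \ref{thm:main-action}.
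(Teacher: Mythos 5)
Your proof is correct and follows essentially the same route as the paper: substitute the trivial assignments $\gamma = 1$, $\chi = 1$ into (\ref{eq:Phi-ob-edge}), (\ref{eq:Phi-ob-face}) and (\ref{eq:Phimor}) and observe that the flanking trivial squares act as identities for horizontal composition. Your extra remarks about composite edges and the triviality of the horizontal inverse of a trivial square merely spell out details the paper treats as immediate.
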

\begin{proof}
On objects, the unit condition is 
$$
(1.g,1.h)=(g,h)
$$
where $1:V\ra G$ is such that $1(v)=1, \forall v\in V$. This clearly holds, putting $\gamma=1$ in (\ref{eq:Phi-ob-edge}) and (\ref{eq:Phi-ob-face}).

On morphisms the unit condition is
$$
\hat{\Phi}({\rm id}_1,((g,h),\eta)) = (1,1).((g,h),\eta)=((g,h),\eta)
$$
where $(1,1):V\ra G\times H$ is such that $(1,1)(v)=(1,1),\, \forall v\in V$. Again this trivially holds putting 
$(\gamma, \chi)=(1,1)$ in (\ref{eq:Phimor}).
\end{proof}

\subsection{Effect of Changes in Discretization}
\label{sec:ch-discr}

It is worth noting that our definition of the category of connections
makes sense only relative to a particular choice of discretization
$\cal{D}$. Nevertheless, this notion of a connection does capture some
of the information contained in a connection in the sense of differential
geometry. Indeed, for a flat connection, it will contain all the
relevant information. The situation at hand is somewhat analogous to other
situations where one makes arbitrary choices such as a choice of local
coordinates, or fixing a gauge, in order to capture a geometric
structure, and describes how the result transforms under a change in
this choice. Aside from this section, we will continue to work in the
setting where we have already made a specific choice, but
in order to clarify the link to the usual differential-geometric
picture, we will describe here how the construction of $\cat{Conn}$
would vary with changes to $\cal{D}$.

In particular, given $(M, \cal{D}, \G)$, the definition of $\cat{Conn}$
depends in the first instance on the cell decomposition which is part
of $\cal{D}$, and we will observe
that changing the choice of orientations $O$ on edges and faces, 
and changing the choice of bigon structures $B$
on faces, gives a straightforward, but non-identity, isomorphism
between the respective categories $\cat{Conn}$ associated to this 
change. We will
formulate some conjectures concerning the effect of more substantial
changes to $\cal{D}$.

A connection in higher gauge theory based on a Lie 2-group $\G$
given by the crossed module $(G,H,\rhd,\partial)$ may be described locally in
terms of a 1-form $A$, valued in $Lie(G)$, the Lie algebra of
$G$, and a 2-form $B$, valued in $Lie(H)$, the Lie algebra of
$H$.  There are also $Lie (H)$-valued transition 1-forms and
transition functions valued in $G$ and $H$, and a global description
in terms of parallel transport, to which we will return in our
forthcoming work \cite{morton-picken-iii}. For further discussion of
higher gauge theory from this point of view, the reader may consult a
variety of works on the subject, such as \cite{basch-hgt,
  schreiberwaldorfii,  martinspickenii, soncinizucchini}.

The relation between such a situation and the discrete description
given here is as follows. Recall that for ordinary gauge theory with
Lie group $G$, a $Lie (G)$-valued connection 1-form gives $G$-valued holonomies
or parallel transports along paths (for flat connections, this is
determined by the homotopy class of the path).  These tell how to
transport a fibre $F$ which carries a $G$-action along a path $\gamma$
from $x$ to $y$
which, after fixing a basepoint in $F_x$ and $F_y$, determines a
correspondence between the two fibres.

A 2-group-valued connection, on the other hand, gives parallel transports for
both paths and surfaces.  Given a homotopy of paths $\Gamma$,
understood as a family of paths $\gamma_t$, with $t \in [0,1]$, which sweeps
out a surface, the holonomy
\begin{equation}
  h = hol(\Gamma)
\end{equation}
for that surface can be seen as a 2-morphism $(g,h)$ in $\G$,
regarded as a 2-category, relating
\begin{equation}
  g = hol(\gamma_0)
\end{equation}
and
\begin{equation}\label{eq:holtarget}
  \partial(h) g = hol(\gamma_1)
\end{equation}

These holonomies are obtained from the local $\G$-connection by
integrating the connection forms over a path or surface, respectively,
taking into account the transition 1-forms and transition functions -
for a detailed description see \cite{martinspickenii}.

Then, as in ordinary (group-valued) gauge theory, gauge
transformations take one connection to another.  Gauge transformations can be
expressed locally as $G$-valued functions and $Lie(H)$-valued
1-forms. The latter give $H$-valued holonomies on paths after
integration.  A new feature of higher gauge theory is that there are
higher gauge transformations as well, given by $H$-valued functions.

In our discrete setting, given a cell structure $\cal{D}$, we can
obtain assignments $g(e)$ by considering an oriented edge $e$ as an
equivalence class of paths $\gamma_0$ consisting of all
parametrizations of $e$, and letting $g(e) =
hol(\gamma_0)$. Similarly, to a face $f$ equipped with a bigon
structure, there corresponds an equivalence class of homotopies of paths 
$\Gamma$,
namely those taking its 1-source to its 1-target and having image
$f$. Thus, we may take $h(f) = hol(\Gamma)$. Likewise for the 1-target 
we have $g(d)=hol(\gamma_1)$, and thus we have a match with the 
assignments in Def. \ref{def:catconn} which are the objects of the category
$\cat{Conn}$.

Now consider a manifold $M$ with two choices of
discretization $\cal{D}$ and $\cal{D'}$ which share the same cell structure and
differ only by the choice of orientations $O$ and $O'$,
and bigon structures $B$ and $B'$. Then we want to
understand the relation between $\cat{Conn}(M, \cal{D},\G)$ and
$\cat{Conn}(M, \cal{D'},\G)$.

We consider the following four types of changes.

\begin{itemize}
\item[1)] change of edge orientation

Suppose $\cal{D}'$ has edge $e_i$ of $\cal{D}$ replaced by $\overline{e}_i$, the oppositely oriented edge.
Then the change in the objects of $\cat{Conn}$ is given by 
$(g,h)\mapsto (\tilde{g}, \tilde{h})$ where on edges and faces respectively we have:
\begin{equation}\label{eq:edgeflip}
g(e_i)\mapsto \tilde{g}(\overline{e}_i)= g(e_i)^{-1}, \qquad 
\xybiglabels \vcenter{\xymatrix@M=0pt@=3pc{\ar@{-} [d] _{} \ar@{-} [r]^{g(e)} \ar@{} [dr]|{h(f)} & \ar@{-} [d]^{} \\
\ar@{-} [r]_{g(d)}  & }}
\mapsto
\xybiglabels \vcenter{\xymatrix@M=0pt@=3pc{\ar@{-} [d] _{} \ar@{-} [r]^{\tilde{g}(e)} \ar@{} [dr]|{\tilde{h}(f)} & \ar@{-} [d]^{} \\
\ar@{-} [r]_{\tilde{g}(d)}  & }} = 
\xybiglabels \vcenter{\xymatrix@M=0pt@=3pc{\ar@{-} [d] _{} \ar@{-} [r]^{g(e)} \ar@{} [dr]|{h(f)} & \ar@{-} [d]^{} \\
\ar@{-} [r]_{g(d)}  & }}
\end{equation}
Indeed the assignments to faces are unchanged due to the orientation conventions 
given below (\ref{eq:conn-objects}) in Def. \ref{def:catconn}.

\item[2)] change of face orientation by vertical inversion

This change of orientation simultaneously affects the bigon structure and is given by 
the replacement of a 2-cell of $\cal{D}$ by a corresponding 2-cell of $\cal{D}'$ with 
the 1-source and 1-target exchanged:

\begin{figure}[htbp] 
\centerline{\relabelbox 
\epsfig{file=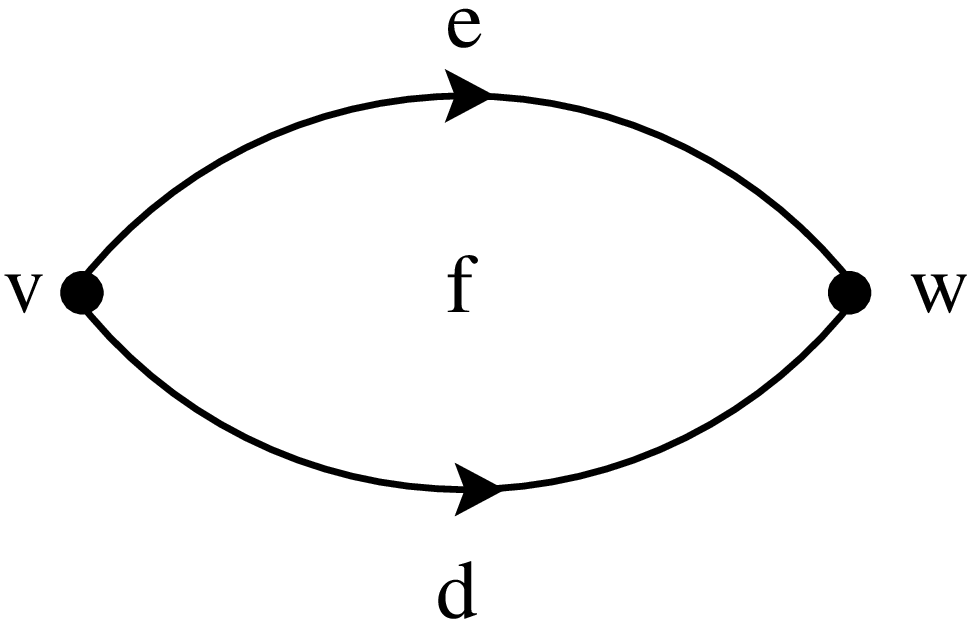,height=2.5cm}
\relabel{v}{$v$}
\relabel{w}{$w$}
\relabel{e}{$e$}
\relabel{f}{$f$}
\relabel{d}{$d$}
\endrelabelbox
${\quad 
\quad}$
\relabelbox 
\epsfig{file=2-cell-2.eps,height=2.5cm}
\relabel{v}{$v$}
\relabel{w}{$w$}
\relabel{e}{$\tilde{e}=d$}
\relabel{f}{$\tilde{f}$}
\relabel{d}{$\tilde{d}=e$}
\endrelabelbox}
\caption{\label{fig:vert-face-flip} Vertical inversion of a face }
\end{figure}
This leads to the following changes in the objects of $\cat{Conn}$, $(g,h)\mapsto (\tilde{g}, \tilde{h})$. 
The edges and the assignments to edges are unchanged, i.e. $g(e_i)\mapsto \tilde{g}(e_i)=g(e_i), \, \forall i$, 
and the assignments to faces are changed as follows:
\begin{equation}\label{eq:vertfaceflip}
\xybiglabels \vcenter{\xymatrix@M=0pt@=3pc{\ar@{-} [d] _{} \ar@{-} [r]^{g(e)} \ar@{} [dr]|{h(f)} & \ar@{-} [d]^{} \\
\ar@{-} [r]_{g(d)}  & }}
\quad \mapsto \quad
\xybiglabels \vcenter{\xymatrix@M=0pt@=3pc{\ar@{-} [d] _{} \ar@{-} [r]^{\tilde{g}(\tilde{e})} \ar@{} [dr]|{\tilde{h}(\tilde{f})} & \ar@{-} [d]^{} \\
\ar@{-} [r]_{\tilde{g}(\tilde{d})}  & }}
\quad = \quad
\xybiglabels \vcenter{\xymatrix@M=0pt@=3pc{\ar@{-} [d] _{} \ar@{-} [r]^{g(d)} \ar@{} [dr]|{h(f)^{-v}} & \ar@{-} [d]^{} \\
\ar@{-} [r]_{g(e)}  & }}
\end{equation}

\item[3)] change of face orientation by horizontal inversion

This change of orientation simultaneously affects the bigon structure and is given by 
the replacement of a 2-cell of $\cal{D}$ by a corresponding 2-cell of $\cal{D}'$ with 
the 0-source and 0-target exchanged, and the orientations of the 1-source and 1-target inverted:

\begin{figure}[htbp]
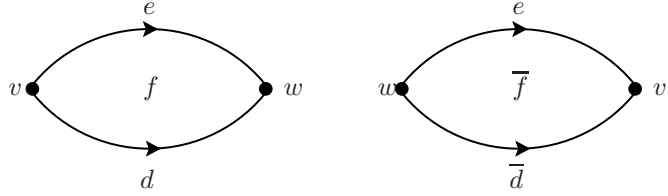
 
\centerline{\relabelbox 
\epsfig{file=2-cell-2.eps,height=2.5cm}
\relabel{v}{$v$}
\relabel{w}{$w$}
\relabel{e}{$e$}
\relabel{f}{$f$}
\relabel{d}{$d$}
\endrelabelbox
${\quad 
\quad}$
\relabelbox 
\epsfig{file=2-cell-2.eps,height=2.5cm}
\relabel{v}{$w$}
\relabel{w}{$v$}
\relabel{e}{$\overline{e}$}
\relabel{f}{$\overline{f}$}
\relabel{d}{$\overline{d}$}
\endrelabelbox}
\caption{\label{fig:hor-face-flip} Horizontal inversion of a face }
\end{figure}
This leads to the following changes in the objects of $\cat{Conn}$, $(g,h)\mapsto (\tilde{g}, \tilde{h})$. 
The edges and the assignments to edges are unchanged, as in 2), and the assignments to faces are changed as follows:
\begin{equation}\label{eq:horfaceflip}
\xybiglabels \vcenter{\xymatrix@M=0pt@=3pc{\ar@{-} [d] _{} \ar@{-} [r]^{g(e)} \ar@{} [dr]|{h(f)} & \ar@{-} [d]^{} \\
\ar@{-} [r]_{g(d)}  & }}
\quad \mapsto \quad
\xybiglabels \vcenter{\xymatrix@M=0pt@=3pc{\ar@{-} [d] _{} \ar@{-} [r]^{\tilde{g}(\overline{e})} \ar@{} [dr]|{\tilde{h}(\overline{f})} & \ar@{-} [d]^{} \\
\ar@{-} [r]_{\tilde{g}(\overline{d})}  & }}
\quad = \quad
\xybiglabels \vcenter{\xymatrix@M=0pt@=3pc{\ar@{-} [d] _{} \ar@{-} [r]^{g(e)^{-1}} \ar@{} [dr]|{h(f)^{-h}} & \ar@{-} [d]^{} \\
\ar@{-} [r]_{g(d)^{-1}}  & }}
\end{equation}

\item[4)] change of 0-source and 0-target

Consider the change in bigon structure on a face $f$ when we choose a different 0-source 
and 0-target, as in the following example, where $v,w$ are replaced by $v',w'$, the
1-source $e=e_1 \, e_2 \, e_3$ is replaced by $e'= e_2\,e_3\,\overline{d}_3$, and the 
1-target $d= d_1\,d_2\,d_3$ is replaced by  $d'= \overline{e}_1\,d_1\,d_2$, as shown in Figure
\ref{fig:orientedfaces}.

\begin{figure}[h]
  \begin{center}
    \includegraphics[height=3.5cm]{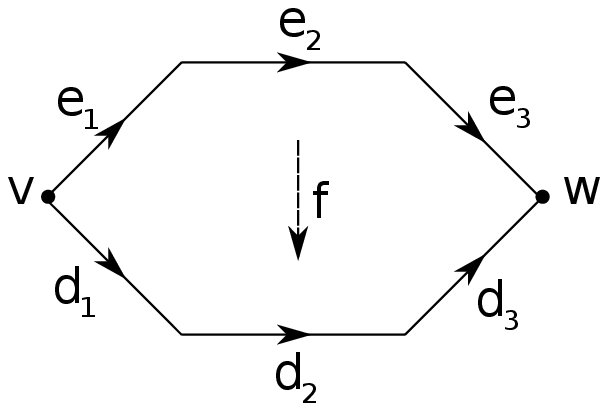}
\hskip 1.5cm
    \includegraphics[height=3.5cm]{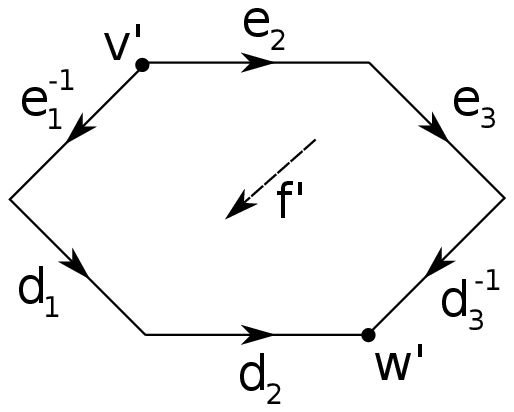}
    \caption{Two bigon structures on a face $f$}\label{fig:orientedfaces}
  \end{center}
\end{figure}
The face $f'$,
 regarded as a family of paths from the 1-source to the 1-target, is obtained from $f$ 
by ``whiskering'': 
\begin{equation*}
  f' = \overline{e}_1 \, f \, \overline{d}_3
\end{equation*}
This leads to the following changes in the objects of $\cat{Conn}$, $(g,h)\mapsto (\tilde{g}, \tilde{h})$.
The edges and the assignments to edges are unchanged, as in 2), and the assignments to faces are changed as follows:
\begin{equation}\label{eq:vwchange}
\xybiglabels \vcenter{\xymatrix@M=0pt@=3pc{\ar@{-} [d] _{} \ar@{-} [r]^{g(e)} \ar@{} [dr]|{h(f)} & \ar@{-} [d]^{} \\
\ar@{-} [r]_{g(d)}  & }}
\quad \mapsto \quad
 \xybiglabels \vcenter{\xymatrix@M=0pt@=3pc{\ar@{-} [d]  \ar@{-} [r]^{\tilde{g}(e')} \ar@{} [dr]|{\tilde{h}(f')} & \ar@{-} [d] \\ \ar@{-} [r]_{\tilde{g}(d')}  & }}
\quad = \quad
    \xybiglabels \vcenter{\xymatrix @=3pc @W=0pc @M=0pc { \ar@{-}[r] ^{g(e_1)^{-1}} \ar@{-}[d]_{} \ar@{}[dr]|{} & \ar@{-}[r] ^{g(e)} \ar@{-}[d]|{}
        \ar@{}[dr]|{h(f)} & \ar@{-}[r] ^{g(d_3)^{-1}} \ar@{-}[d]^{}  \ar@{}[dr]|{} & \ar@{-}[d]^{}
        \\ \ar@{-}[r] _{g(e_1)^{-1}} & \ar@{-}[r] _{g(d)} & \ar@{-}[r] _{g(d_3)^{-1}} &
      }}
\end{equation}

Formula (\ref{eq:vwchange}) is easily adapted to the general case, where a concatenation of edges $\nu$ from $v'$ to $v$ replaces $\overline{e}_1$ and a concatenation of edges $\omega$ from $w$ to $w'$ replaces $\overline{d}_3$.

\end{itemize}

\begin{theorem}
The correspondences (\ref{eq:edgeflip}), (\ref{eq:vertfaceflip}), (\ref{eq:horfaceflip}) and (\ref{eq:vwchange}) are functorial and yield isomorphisms of the categories 
$\cat{Conn}(M,\cal{D},\G)$ and $\cat{Conn}(M,\cal{D}',\G)$.
\label{thm:discr-change}
\end{theorem}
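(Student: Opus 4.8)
The plan is to treat each of the four changes separately, since any change of discretization sharing the same cell structure is a composite of finitely many moves of types 1)--4) (edges and faces are finite sets, so one can flip edge orientations one at a time, then adjust each face's vertical/horizontal orientation and bigon basepoint independently), and a composite of isomorphisms of categories is again an isomorphism. So it suffices to check that each of the four prescribed correspondences extends to a functor which is invertible. In each case the prescription on objects is already given by the displayed formulas (\ref{eq:edgeflip}), (\ref{eq:vertfaceflip}), (\ref{eq:horfaceflip}), (\ref{eq:vwchange}); what remains is to (a) define the correspondence on morphisms, (b) verify that an object or morphism of $\cat{Conn}(M,\cal{D},\G)$ really is sent to a legal object or morphism of $\cat{Conn}(M,\cal{D}',\G)$, i.e. that the defining conditions (\ref{eq:conn-objects}) and (\ref{eq:conn-morphisms}) are preserved, and (c) verify functoriality (preservation of the composition (\ref{eq:conn-composition}) and identities (\ref{eq:conn-identities})) together with invertibility.

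For the morphism-level definitions one mimics the object-level formulas with the square (\ref{eq:conn-morphisms-edge}) in place of the object square. Concretely: for move 1), a morphism $\eta(e_i)$ on the flipped edge becomes its horizontal inverse $\eta(e_i)^{-h}$ (and $\eta$ is unchanged on faces, by the orientation conventions below (\ref{eq:conn-morphisms-edge})); for move 2), $\eta$ is unchanged on edges and the face-square $\eta(\tilde f)$ is obtained from the relevant data by vertical inversion, consistently with (\ref{eq:vertfaceflip}); for move 3), the edge-squares on $\bar e$ become horizontal inverses and the orientation conventions take care of the rest; for move 4), the whiskering $f'=\bar e_1\, f\,\bar d_3$ is implemented on morphisms by pre- and post-composing horizontally with the identity squares on $g(e_1)^{-1}$ and $g(d_3)^{-1}$, exactly paralleling (\ref{eq:vwchange}). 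That the target computed from these morphism formulas agrees with applying the object formula to the old target is then a direct consequence of the interchange law and the behaviour of horizontal/vertical inverses under composition established in (\ref{eq:hor-inverse-hor-prod})--(\ref{eq:ver-inverse-ver-prod}); preservation of (\ref{eq:conn-objects}) follows by applying $\partial$ and using (\ref{cm1}), while preservation of (\ref{eq:conn-morphisms}) is checked by the same ``paste a bigon cylinder and use interchange'' manipulation already used in the proof of Theorem \ref{thm:conn} and in Lemma \ref{lem:well-defined}.

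Functoriality is then almost immediate: composition in $\cat{Conn}$ is pointwise multiplication of the $\eta$'s (equivalently vertical composition of edge-squares), and each of the four morphism-level operations (horizontal inverse, vertical inverse, horizontal composition with a fixed identity square) commutes with vertical composition of edge-squares — for the inverses this is precisely (\ref{eq:hor-inverse-ver-prod}) and (\ref{eq:ver-inverse-ver-prod}), and for whiskering it is the interchange law. Identities ($\eta\equiv 1$) are visibly preserved since all four operations fix the trivial square. Invertibility: move 1) applied twice is the identity (flipping $e_i$ back), and likewise moves 2) and 3) are involutions (two vertical, resp. horizontal, inversions cancel, using that $(-v)$ and $(-h)$ are two-sided inverses); move 4) has as inverse the whiskering by the reverse concatenations $\nu^{-1},\omega^{-1}$, and the composite is the identity because whiskering by $e\,\bar e$ on either side is trivial. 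Hence each correspondence is an isomorphism of categories, and so is any composite of them, which proves the theorem.

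The main obstacle I anticipate is bookkeeping rather than conceptual: making the morphism-level definitions for moves 2), 3) and especially 4) precise enough that ``preserves (\ref{eq:conn-morphisms})'' is a genuine verification and not a restatement, and checking carefully that the orientation conventions spelled out below (\ref{eq:conn-objects}) and below (\ref{eq:conn-morphisms-edge}) (concatenations of edges, right-to-left edges replaced by inverses) interact correctly with the flips — in particular that for move 1) the face assignments really are unchanged, as claimed after (\ref{eq:edgeflip}), once an edge appearing in some $d$ or $e$ gets reversed. The whiskering case 4) is where one must be most careful, since $e'$ and $d'$ are genuinely different concatenations of edges from $e$ and $d$, and one has to confirm that the square built by horizontal composition in (\ref{eq:vwchange}) indeed has 1-source edge-product $g(e')$ and 1-target edge-product $g(d')$; this is a direct but slightly tedious computation with the conventions, and it is the one step I would write out in full.
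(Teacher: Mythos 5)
Your proposal matches the paper's proof in both structure and substance: you extend each object-level correspondence to morphisms in exactly the way the paper does (horizontal inverses of edge-squares for move 1, unchanged edge-squares with vertically/horizontally inverted or whiskered face data for moves 2--4), verify compatibility with (\ref{eq:conn-morphisms}) via the inverse properties (\ref{eq:hor-inverse-hor-prod})--(\ref{eq:ver-inverse-ver-prod}) and interchange, and conclude functoriality and invertibility just as in the paper (e.g.\ your composition check for move 1 is precisely the paper's $(\eta'\eta)^{-h}=\eta'^{-h}\eta^{-h}$ computation). The only additions — the remark that general $(O,B)$-changes are composites of the four moves, and the explicit $\partial$-check of (\ref{eq:conn-objects}) — are harmless, and the bookkeeping points you flag (conventions under edge flips, sources/targets in the whiskering case) are exactly the steps the paper also treats briefly.
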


\begin{proof}
Each of the correspondences between objects extends to a functor $\cat{Conn}(M,\cal{D},\G)\rightarrow\cat{Conn}(M,\cal{D}',\G)$ via the following maps of morphisms 
$((g,h),\eta)\mapsto ((\tilde{g},\tilde{h}),\tilde{\eta})$:
\begin{itemize}
\item[1)] For edge $e_i$ we have:
\begin{equation}
\xybiglabels \vcenter{\xymatrix@M=0pt@=3pc{\ar@{-} [d] _{} \ar@{-} [r]^{g(e_i)} \ar@{} [dr]|{\eta(e_i)} & \ar@{-} [d]^{} \\
\ar@{-} [r]_{g'(e_i)}  & }}
\mapsto
\xybiglabels \vcenter{\xymatrix@M=0pt@=3pc{\ar@{-} [d] _{} \ar@{-} [r]^{\tilde{g}(\overline{e}_i)} \ar@{} [dr]|{\tilde{\eta}(\overline{e}_i)} & \ar@{-} [d]^{} \\
\ar@{-} [r]_{\tilde{g}'(\overline{e}_1)}  & }} = 
\xybiglabels \vcenter{\xymatrix@M=0pt@=3pc{\ar@{-} [d] _{} \ar@{-} [r]^{g(e_i)^{-1}} \ar@{} [dr]|{\eta(e_i)^{-h}} & \ar@{-} [d]^{} \\
\ar@{-} [r]_{g'(e_i)^{-1}}  & }}
\end{equation}
On faces, equation (\ref{eq:conn-morphisms})  of Def. \ref{def:catconn} is unchanged due to the orientation conventions 
given below  (\ref{eq:conn-morphisms}).
\item[2)] We have $\tilde{g}(e_i) =g(e_i)$ and we set $\tilde{\eta}(e_i) =\eta(e_i)$ for all $i$, so that the squares (\ref{eq:conn-morphisms-edge}) are unchanged. For the face of Figure \ref{fig:vert-face-flip}, we have:
\begin{equation}\label{eq:ver-flip-mor}
\xybiglabels \vcenter{\xymatrix@M=0pt@=3pc{\ar@{-} [d] _{} \ar@{-} [r]^{\tilde{g}(\tilde{e})} \ar@{} [dr]|{\tilde{\eta}(\tilde{e})} & \ar@{-} [d]^{} \\
\ar@{-} [r]_{\tilde{g}'(\tilde{e})}  & }} =
\xybiglabels \vcenter{\xymatrix@M=0pt@=3pc{\ar@{-} [d] _{} \ar@{-} [r]^{g(d)} \ar@{} [dr]|{\eta(d)} & \ar@{-} [d]^{} \\
\ar@{-} [r]_{g'(d)}  & }},
\qquad
\xybiglabels \vcenter{\xymatrix@M=0pt@=3pc{\ar@{-} [d] _{} \ar@{-} [r]^{\tilde{g}(\tilde{d})} \ar@{} [dr]|{\tilde{\eta}(\tilde{d})} & \ar@{-} [d]^{} \\
\ar@{-} [r]_{\tilde{g}'(\tilde{d})}  & }} =
\xybiglabels \vcenter{\xymatrix@M=0pt@=3pc{\ar@{-} [d] _{} \ar@{-} [r]^{g(e)} \ar@{} [dr]|{\eta(e)} & \ar@{-} [d]^{} \\
\ar@{-} [r]_{g'(e)}  & }}
\end{equation}
It is a simple exercise to show that (\ref{eq:conn-morphisms}) for $((g,h),\eta)$ implies (\ref{eq:conn-morphisms}) for $((\tilde{g},\tilde{h}),\tilde{\eta})$, using 
(\ref{eq:vertfaceflip}) and (\ref{eq:ver-flip-mor}), together with property (\ref{eq:ver-inverse-ver-prod}) of the vertical inverse.

\item[3)] As in the previous case, we have $\tilde{g}(e_i) =g(e_i)$ and we set $\tilde{\eta}(e_i) =\eta(e_i)$ for all $i$, so that the squares (\ref{eq:conn-morphisms-edge}) are unchanged.
For the face of Figure \ref{fig:hor-face-flip}, we have:
\begin{equation}\label{eq:hor-flip-mor}
\xybiglabels \vcenter{\xymatrix@M=0pt@=3pc{\ar@{-} [d] _{} \ar@{-} [r]^{\tilde{g}(\overline{e})} \ar@{} [dr]|{\tilde{\eta}(\overline{e})} & \ar@{-} [d]^{} \\
\ar@{-} [r]_{\tilde{g}'(\overline{e})}  & }} =
\xybiglabels \vcenter{\xymatrix@M=0pt@=3pc{\ar@{-} [d] _{} \ar@{-} [r]^{g(e)^{-1}} \ar@{} [dr]|{\eta(e)^{-h}} & \ar@{-} [d]^{} \\
\ar@{-} [r]_{g'(e)^{-1}}  & }},
\qquad
\xybiglabels \vcenter{\xymatrix@M=0pt@=3pc{\ar@{-} [d] _{} \ar@{-} [r]^{\tilde{g}(\overline{d})} \ar@{} [dr]|{\tilde{\eta}(\overline{d})} & \ar@{-} [d]^{} \\
\ar@{-} [r]_{\tilde{g}'(\overline{d})}  & }} =
\xybiglabels \vcenter{\xymatrix@M=0pt@=3pc{\ar@{-} [d] _{} \ar@{-} [r]^{g(d)^{-1}} \ar@{} [dr]|{\eta(d)^{-h}} & \ar@{-} [d]^{} \\
\ar@{-} [r]_{g'(d)^{-1}}  & }}
\end{equation}
Equation (\ref{eq:conn-morphisms}) for $((\tilde{g},\tilde{h}),\tilde{\eta})$ follows directly from  (\ref{eq:conn-morphisms}) for $((g,h),\eta)$ by horizontal inversion, using property 
(\ref{eq:ver-inverse-ver-prod}) of the horizontal inverse.

\item[4)] As in the two previous cases, we have $\tilde{g}(e_i) =g(e_i)$ and we set $\tilde{\eta}(e_i) =\eta(e_i)$ for all $i$, so that the squares (\ref{eq:conn-morphisms-edge}) are unchanged.
For the face of Figure \ref{fig:orientedfaces} we define $\tilde{\eta}$ by
\begin{equation}\label{eq:vwchange-mor}
 \xybiglabels \vcenter{\xymatrix@M=0pt@=3pc{\ar@{-} [d]  \ar@{-} [r]^{\tilde{g}(e')} \ar@{} [dr]|{\tilde{\eta}(e')} & \ar@{-} [d] \\ \ar@{-} [r]_{\tilde{g}'(e')}  & }}
\quad = \quad
    \xybiglabels \vcenter{\xymatrix @=3pc @W=0pc @M=0pc { \ar@{-}[r] ^{g(e_1)^{-1}} \ar@{-}[d]_{} \ar@{}[dr]|{} & \ar@{-}[r] ^{g(e)} \ar@{-}[d]|{}
        \ar@{}[dr]|{\eta(e)} & \ar@{-}[r] ^{g(d_3)^{-1}} \ar@{-}[d]^{}  \ar@{}[dr]|{} & \ar@{-}[d]^{}
        \\ \ar@{-}[r] _{g(e_1)^{-1}} & \ar@{-}[r] _{g'(e)} & \ar@{-}[r] _{g(d_3)^{-1}} &
      }}
\end{equation}
and an analogous equation with $e',\, e$ replaced by $d',\, d$. Equation (\ref{eq:conn-morphisms}) for $((\tilde{g},\tilde{h}),\tilde{\eta})$ then follows directly from  (\ref{eq:conn-morphisms}) for $((g,h),\eta)$ by composing horizontally on the left and right with identity squares for $g(e_1)^{-1}$ and $g(d_3)^{-1}$ and using (\ref{eq:vwchange}) and (\ref{eq:vwchange-mor}).

\end{itemize}

The functors we have obtained preserve identities and composition, as can be easily verified. With regards to composition and the first correspondence, this is based on
$$
\widetilde{(\eta'\eta)}(\overline{e}_i)= (\eta'\eta)^{-h}(e_i) = \eta'^{-h}(e_i) \eta^{-h}(e_i) = \tilde{\eta}'(\overline{e}_i) \tilde{\eta}(\overline{e}_i)
$$
where the square notation is understood. For the other three correspondences $\tilde{g}=g$ and $\tilde{\eta}=\eta$ on edges, so composition is obviously preserved. Since the maps, both at the object and morphism level, are easily seen to be invertible, the correspondences yield isomorphisms of the categories 
$\cat{Conn}(M,\cal{D},\G)$ and $\cat{Conn}(M,\cal{D}',\G)$.

\end{proof}

\begin{remark}
Theorem \ref{thm:discr-change} is in the same spirit as
 \cite[Def. 54, Lemma 55]{bullivant-et-al} concerning a change of the single basepoint in the boundary of a 2-cell.

\end{remark}

The preceding theorem tells about how the groupoid $\cat{Conn}$
changes when we change the orientations and bigon structure of the
 cell structure. This shows that, up to isomorphism, $\cat{Conn}$ does
not depend on the choice of $(O,B)$, but only on the cells themselves and the
associated attaching maps from the CW-structure. Since in general one
is more interested in bare manifolds than ones equipped with a
CW-structure, it is natural to ask whether analogous results hold when
one changes this part of $\cal{D}$. This leads us toward connections
in the usual smooth sense.

To address this question fully would require either or both of two
bodies of theory which are more than we wish to engage with here:
namely, a fuller category-theoretic treatment of double categories,
and the analytic techniques used to handle the infinite-dimensional
manifolds coming from spaces of functions and $p$-forms.
 Thus, we simply present two conjectures which suggest a
possible line of inquiry.

First, consider the case where we restrict our attention to flat
connections and the moduli spaces involved are
finite-dimensional. Denote by  $\cat{Conn}_0$ the category of
\textit{flat} $\G$-connections on $(M,\cal{D})$. We hope to recover an
analog of a result which holds for 1-groups: different discretizations
yield transformation groupoids which are Morita
equivalent.

Unfortunately, the notion of Morita
equivalence, i.e. equivalence at the level of the representation categories,
 has not yet been sufficiently developed for double categories, as far as we know, 
and the issue
of which higher category of double categories one works with could affect the 
representation category. Thus we are not in a position 
to state, much less prove, an analogous theorem.
However, we conjecture that, for a suitable
notion of Morita equivalence of double categories, the following will
be true:

\begin{conjecture}\label{conj:morita}
  If ${\cal D}_1$ and ${\cal D}_2$ are two
  different discretizations of a manifold $M$, then the double groupoids
  $\cat{Conn}_0 \wquot \cat{Gauge}$ for 
  $(M,{\cal D}_1, \G)$ and   $(M,{\cal D}_2, \G)$ are
  Morita equivalent.
\end{conjecture}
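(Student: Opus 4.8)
The plan is to reduce the conjecture to a \emph{discretization-independent} model together with a working notion of Morita equivalence for double groupoids. First I would fix a candidate definition: following the principle quoted above that Morita equivalence should mean equivalence at the level of representation categories, I would declare two double groupoids Morita equivalent when they are connected by a span of double functors each of which is an \emph{essential equivalence} --- essentially surjective and fully faithful in the horizontal, vertical, and square directions simultaneously --- so that in particular an ordinary equivalence of double groupoids is a Morita equivalence. With such a definition in hand, it suffices to exhibit, for each discretization ${\cal D}$, an essential equivalence between $\cat{Conn}_0 \wquot \cat{Gauge}$ for $(M,{\cal D},\G)$ and a single double groupoid depending only on $M$ and $\G$.

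The natural candidate for that common model is the \emph{transport double groupoid} whose objects are 2-functors $\Pi_2(M)\ra\G$ (with $\Pi_2(M)$ the fundamental 2-groupoid of $M$ and $\G$ viewed as a one-object 2-category), whose horizontal and vertical morphisms are the two flavours of pseudonatural transformation, and whose squares are modifications. This is exactly the kind of object that the third paper \cite{morton-picken-iii} constructs and compares to $\cat{Conn}\wquot\cat{Gauge}$. The key point is that a discretization ${\cal D}$ equips $M$ with a CW-presentation of its homotopy 2-type: edges present generators of $\pi_1$ and faces present 2-cells and relations. A \emph{flat} $\G$-connection is then precisely an assignment of $\G$-data realizing a 2-functor out of this presentation, and gauge transformations at each level match pseudonatural transformations and modifications. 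I would therefore prove that evaluation on the cells of ${\cal D}$ defines an essential equivalence from the transport double groupoid to $\cat{Conn}_0\wquot\cat{Gauge}$; since the source is manifestly independent of ${\cal D}$, two discretizations yield Morita equivalent double groupoids by composing one such equivalence with the inverse of the other.

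Where the homotopy-theoretic model is inconvenient to control directly, I would fall back on an \emph{elementary-move} argument, which also isolates what flatness buys us. Any two admissible discretizations of a compact $M$ should be connected by a finite sequence of local moves: the orientation- and bigon-changing moves already handled by Theorem~\ref{thm:discr-change}, together with cellular subdivisions and collapses (Pachner-type moves in the triangulated case, Whitehead expansions and collapses in general). For each subdivision move I would write down an explicit comparison double functor and check that it is an essential equivalence. On objects, the flatness condition forces the $G$- and $H$-data on any newly introduced cell to be \emph{determined} --- not free --- by the data on the surrounding cells, so adding or deleting a cell changes neither the isomorphism classes of flat connections nor the hom-groupoids of gauge transformations between them; this is exactly where restricting to $\cat{Conn}_0$ rather than $\cat{Conn}$ is essential, since for non-flat connections the extra cell would carry genuinely new curvature data and discretization-independence would fail. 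Composing the equivalences along the sequence of moves, and invoking Theorem~\ref{thm:discr-change} for the orientation and bigon steps, would then yield the required Morita equivalence.

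The hard part will be the two foundational inputs that the authors flag as open. First, one must pin down a notion of Morita equivalence for double groupoids that is a genuine equivalence relation, that is implied by ordinary equivalence, and under which the comparison functors above qualify; this in turn forces a choice of the ambient weak 2-category of double groupoids, and different choices could give inequivalent representation theories, so that even the correct \emph{statement} is not yet fixed. Second, one needs the topological input that any two discretizations carrying the bigon structure of Section~\ref{sec:Hdisc-conn-transp} are connected by the local moves above through discretizations that \emph{retain} an admissible bigon structure at every stage; extending Pachner/Whitehead move-connectivity to the CW-with-bigon setting, and checking that each intermediate move preserves well-definedness of $\cat{Conn}_0\wquot\cat{Gauge}$, is the principal obstacle and the reason the result is offered as a conjecture rather than a theorem.
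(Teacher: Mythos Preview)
The statement you are addressing is explicitly labelled a \emph{conjecture} in the paper, and the authors do not provide a proof; indeed, they write that the notion of Morita equivalence for double categories has not yet been sufficiently developed, so they ``are not in a position to state, much less prove, an analogous theorem.'' There is therefore no proof in the paper against which to compare your proposal.

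That said, your proposal is a sensible outline of how one might attack the conjecture, and it is consonant with the hints the authors give: the idea of comparing each $\cat{Conn}_0\wquot\cat{Gauge}$ to a discretization-independent transport double groupoid built from $\Pi_2(M)$ is exactly the programme the paper points toward via \cite{morton-picken-iii}, and your elementary-move fallback is in the spirit of Theorem~\ref{thm:discr-change}. You also correctly identify the two genuine obstacles the authors flag---fixing a workable definition of Morita equivalence for double groupoids, and establishing move-connectivity for discretizations with bigon structure---and you are right that these, not the comparison functors themselves, are where the real work lies. Your sketch is thus a plausible roadmap rather than a proof, and you should present it as such; in particular, your proposed definition of Morita equivalence via spans of essential equivalences would itself need to be justified (transitivity, and that it really captures equivalence of representation theories in the double-categorical setting) before the rest of the argument can proceed.
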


For 1-groupoids, Morita equivalent groupoids describe equivalent
physical situations. This is a key idea, for example, behind
symplectic reduction. Providing that a suitable definition of Morita
equivalence for double groupoids has the same property, we would then
be able to say that, for flat connections, the choice of $\cal{D}$ is
purely a convenience. Indeed, we propose this physical equivalence as
a useful criterium for a suitable definition of Morita equivalence in
this context.

If we are not considering flat connections, then of course we should
not expect any such result. Rather, we are then treating $\cal{D}$ as
a ``probe'' of a connection, which gives a finite approximation to the
continuum theory by taking holonomies along particular edges and
faces. At best, we may hope that there is a double groupoid 
$\cat{Conn} \wquot \cat{Gauge}(M,\G)$ that is a
limit over all
discretizations as the probes are taken to be increasingly finer. This at least
potentially makes sense, since there is a partial order relation on
all CW-structures by refinement.  In particular, 
for 2-dimensional manifolds $(V,E,F)$ is a refinement of
$(V',E',F')$ if $V'$ is a subset of $V$, every edge in $E$ lies within
an edge of $E'$, and every face in $F$ lies within a single face of
$F'$. Clearly, knowing holonomies of a connection on $(M,{\cal D'})$ is
sufficient to determine them on $(M,{\cal D})$, and similarly for
gauge transformations. So one can think of approaching the continuum
limit through successive refinements of discretization. This leads us
to the following conjecture:

\begin{conjecture}\label{conj:continuumlimit}
 $\cat{Conn} \wquot \cat{Gauge}(M,\G)$ is the 
inductive limit of $\cat{Conn} \wquot \cat{Gauge}(M,{\cal D},\G)$
   over all discretizations ${\cal D}$ of $M$.
\end{conjecture}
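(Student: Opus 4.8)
The plan is to first make Conjecture \ref{conj:continuumlimit} precise, since neither the double groupoid $\cat{Conn}\wquot\cat{Gauge}(M,\G)$ nor the word ``limit'' has yet been assigned a meaning. I would fix, as the ambient setting, the category of double groupoids with \emph{strict} double functors (weakening to pseudo double functors only if forced to), and take as transition morphisms the \emph{coarsening} double functors: whenever ${\cal D}$ refines ${\cal D}'$, a discretized connection on ${\cal D}$ determines one on ${\cal D}'$ by multiplying the $G$-labels and horizontally composing the $H$-squares along sub-edges and sub-faces, exactly following the ``composite edge / composite face'' conventions of Definition \ref{def:catconn}; the same recipe coarsens morphisms of $\cat{Conn}$, and since $\cat{Gauge}$ depends only on the vertex set one coarsens its objects and morphisms by plain restriction $V_{\cal D}\to V_{{\cal D}'}$. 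One then checks directly that these operations are compatible with the $\cat{Gauge}$-action of Definition \ref{def:Phi-hat-for-Conn+Gauge} and with the double-groupoid structure of Section \ref{sec:transdoublegpd}. With coarsening as the transition maps the object sought is really a \emph{projective} limit over the poset of discretizations ordered by refinement; I would read the ``inductive limit'' of the statement in that directed sense, i.e.\ as the universal double groupoid equipped with compatible coarsening projections to every $\cat{Conn}\wquot\cat{Gauge}(M,{\cal D},\G)$.

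The next step is to turn the indexing into a genuine directed system. By Theorem \ref{thm:discr-change} the choice of orientations and bigon structures is immaterial up to canonical isomorphism, so one may restrict to a cofinal subfamily of \emph{tame} discretizations --- for instance smooth triangulations of $M$ with orientation and bigon data fixed by some convention --- and the problem reduces to the classical geometric input that any two such triangulations admit a common subdivision. This is where real geometry, rather than the square calculus, enters, and it is also where the restriction to tame discretizations is essential, since common refinements of arbitrary CW-structures need not exist; in dimension greater than $2$ one must moreover control how subdivision interacts with the bigon structure on $2$-cells, which is the combinatorial heart of the setup. One also verifies that coarsening composes strictly --- associativity of the composite-cell conventions, already invoked in the proof of Theorem \ref{thm:conn} --- so that we obtain an honest functor from the cofinal directed poset to double groupoids.

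With the directed system in place, the strict projective limit of double groupoids exists and is computed level-wise: its cells are the compatible families, indexed by discretizations, matching under coarsening. I would then try to identify this limit with $\cat{Conn}\wquot\cat{Gauge}(M,\G)$ by one of two routes. The first, essentially definitional, is to \emph{declare} $\cat{Conn}\wquot\cat{Gauge}(M,\G)$ to be this limit and merely record its universal property. The second, more informative, is to use the differential-geometric dictionary sketched after Theorem \ref{thm:discr-change} together with the forthcoming equivalence of \cite{morton-picken-iii}, under which a compatible family of holonomy assignments over all discretizations is precisely a transport $2$-functor; this would identify the continuum limit with the un-discretized transport double groupoid. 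The flat case is the natural first milestone and is much easier: there $\cat{Conn}_0\wquot\cat{Gauge}$ is finite-dimensional, the coarsening maps become equivalences as soon as the $1$-skeleton is rich enough to see $\pi_1(M)$ and the $2$-cells rich enough to see $\pi_2(M)$, so the system stabilizes up to equivalence and the limit is attained (up to equivalence) at any sufficiently fine discretization --- which dovetails with Conjecture \ref{conj:morita}.

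The main obstacle is, as the surrounding text already warns, foundational rather than computational. There is at present no settled theory of (co)limits or of Morita-type weak equivalences for double groupoids, and the answer genuinely depends on whether one works with strict double functors, pseudo double functors, or the full tricategory of double categories: building the strict level-wise limit is routine, but showing it is the \emph{right} object --- stable under the weak equivalences one wishes to allow, and agreeing with the transport picture --- cannot even be formulated until that ambient structure is chosen. A closely related and more concrete difficulty is that the naive level-wise limit is ``too large'': its vertical morphisms, for instance, are arbitrary $G$-valued functions on the dense set $\bigcup_{\cal D}V_{\cal D}$, so to recover a moduli object resembling smooth connections one must cut down to continuous or smooth compatible families, reintroducing exactly the infinite-dimensional analysis (Fr\'echet or diffeological) that the discretized framework was designed to avoid. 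Controlling this passage --- or else accepting that $\cat{Conn}\wquot\cat{Gauge}(M,\G)$ is meant to be the ``generalized'' object of the full projective limit, in the spirit of the Ashtekar--Lewandowski picture of lattice gauge theory --- is the crux.
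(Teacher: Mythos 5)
This statement is one of the paper's two explicit conjectures: the paper offers no proof of it, only the motivating remarks preceding it (the refinement order on discretizations, the observation that holonomy data on one discretization determines it on a comparable one, and the explicit caveat that a precise formulation would need both a better-developed categorical theory of double categories and infinite-dimensional analysis, so that the conjecture ``might need to be improved''). So there is no argument in the paper to compare yours against; what you have written is a research programme rather than a proof, and as such it aligns well with, and usefully sharpens, the paper's own informal discussion: coarsening functors defined by the composite-edge/composite-face conventions of Definition \ref{def:catconn}, restriction of $\cat{Gauge}$ along $V_{\cal D}\to V_{{\cal D}'}$, a cofinal family of triangulations with common subdivisions, a level-wise strict limit, stabilization in the flat case in the spirit of Conjecture \ref{conj:morita}, and identification of the limit with a transport-functor picture via \cite{morton-picken-iii}.

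The genuine gaps to be aware of are these. First, your ``essentially definitional'' route empties the conjecture of content: the statement is only meaningful if $\cat{Conn}\wquot\cat{Gauge}(M,\G)$ is defined independently (from smooth $Lie(G)$- and $Lie(H)$-valued forms, or transport 2-functors), and your second route defers exactly that comparison to forthcoming and unproven material, so neither route settles anything. Second, you silently replace the stated \emph{inductive} limit by a \emph{projective} limit along coarsening maps; this is the mathematically natural choice (and matches the Ashtekar--Lewandowski picture you invoke), but the paper's own motivating sentence has the determination running from ${\cal D}'$ to its refinement ${\cal D}$, which is what makes ``inductive'' consistent with its text, so a precise write-up must commit to a direction and acknowledge that it is reformulating the conjecture rather than interpreting it. Third, even the transition functors are not automatic: recovering the square of a coarse face from the fine faces it contains requires assembling fine squares, each with its own bigon structure, into a rectangular array --- whiskering by fine edges interior to the coarse 1-source and 1-target as in case 4 of Theorem \ref{thm:discr-change}, with the $\eta$-labels of interior fine edges cancelling --- and one must prove this evaluation is independent of how the coarse bigon is swept out, and that it is compatible with (\ref{eq:conn-morphisms}), with (\ref{eq:Phimor}), and with composition of coarsenings; this is where the 2-dimensional algebra does real work and cannot be dismissed in a clause. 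Finally, as you yourself note, the absence of a settled notion of limits and weak equivalences for double groupoids means the ``right'' universal property cannot yet be fixed; the paper flags precisely this, which is why the statement remains a conjecture, and your proposal, while sensible, leaves it one.
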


This conjecture might need to be improved in the light of analytic
considerations about the presumably infinite-dimensional spaces of
objects, morphisms, and squares in $\cat{Conn} \wquot \cat{Gauge}(M,\G)$, and how they arise as
limits of finite-dimensional spaces, but does suggest the form of the
hoped-for result.

In the case of flat connections, and in combination with Conjecture
\ref{conj:morita}, it would imply that to find $\cat{Conn}_0 \wquot \cat{Gauge}(M,\G)$ up to
Morita equivalence, we need only find $\cat{Conn}_0 \wquot \cat{Gauge}(M,{\cal D},\G)$ for
\textit{any} discretization ${\cal D}$ of $M$.

\section{The Transformation Double Groupoid for Higher Connections}
\label{sec:transdoublegpd}

It is well known that the action of a group $G$ on a set $X$ can be described by a (transformation) groupoid $X\wquot G$, with objects $X$ and morphisms of the form 
$x \stackrel{(g,x)}{\longrightarrow} g.x$. In \cite{morton-picken-i} we showed that an analogous situation occurs for the action of a 2-group $\G$ on a category $\cat{C}$, namely this ``categorified action'' can be described by a transformation {\em double} category, $\cat{C}\wquot \G$, which becomes a transformation double groupoid when the category $\cat{C}$ is a groupoid. Below we review the construction of \cite{morton-picken-i} and present the transformation double groupoid $\cat{Conn}\wquot \cat{Gauge}$ which arises in our case.

\subsection{The transformation double category  \texorpdfstring{$\cat{C}\wquot \G$}{} }

Given an action of a  categorical group $\G$ on a category $\C$, as in Def. \ref{def:2grp_action-Phi-hat}, we showed in \cite{morton-picken-i} that one can define a double category $\cat{C}\wquot \G$, with objects being the objects of $\C$, horizontal morphisms being the morphisms of $\C$, vertical morphisms being of the form $x \stackrel{(\gamma,x)}{\longrightarrow} \gamma\act x$, where $x\in {\rm Ob} \, X$, $\gamma\in{\rm Ob}\, \G$, and squares being ${\rm Mor}\, \G \times {\rm Mor}\, \C$, with horizontal and vertical sources and targets given by:
 \begin{equation}
    \xymatrix@C=7.5pc@R=4pc {
      x \ar[r]^{f} \ar[d]_{(\gamma, x)} \drtwocell<\omit>{\omit *+[F]{(\gamma,\chi),f}}  &  y \ar[d]^{(\partial(\chi)\gamma,y)}  \\
      {\gamma \act x} \ar[r]_{(\gamma,\chi)\act f}  & {(\partial(\chi)\gamma)\act y}
    }
  \label{eq:squaredef}   
  \end{equation}
Here $\gamma\act x$ denotes $\hat{\Phi}(\gamma,x)$ and $(\gamma,\chi)\act f$ denotes $\hat{\Phi}((\gamma,\chi), f)$, as in Def. \ref{def:2grp_action-Phi-hat}.
In \cite{morton-picken-i} we prove that this is indeed a double category with suitable horizontal and vertical composition of the respective morphisms, and horizontal and vertical composition of squares. The double category $\C \wquot G$ captures in a single structure several different group actions that are at work simultaneously, namely the action of the objects of $\G$ on the objects of $\C$, the action of the objects of $\G$ on the morphisms of $\C$, and finally the action of the morphisms of $\G$ on the morphisms of $\C$. See \cite[Def. 3.5]{morton-picken-i} for a detailed exposition. The vertical morphisms of $\C \wquot G$ are invertible, and when the horizontal morphisms are also invertible, i.e. $\C$ is a groupoid,  $\C \wquot G$ becomes a double groupoid.

\subsection{The Transformation Double Groupoid \texorpdfstring{$\cat{Conn} \wquot \cat{Gauge}$}{} }

We now give a detailed definition of the transformation double groupoid $\cat{Conn} \wquot \cat{Gauge}$ that arises in our specific case.

\begin{definition}
\label{def:Conn-over-Gauge}
Given an action of $\cat{Gauge}$ on $\cat{Conn}$, as defined in Def. \ref{def:Phi-hat-for-Conn+Gauge}, the transformation double groupoid $\cat{Conn} \wquot \cat{Gauge}$ is given by:

\begin{itemize}
\item \textbf{Objects} are the objects $(g,h)$ of  $\cat{Conn}$ 

\item \textbf{Horizontal morphisms} are the morphisms $((g,h),\eta)$ of  $\cat{Conn}$, with source maps, target maps and horizontal composition defined as in  $\cat{Conn}$ 

\item \textbf{Vertical morphisms}  are the set of pairs $(\gamma, (g,h))$, where $\gamma:V\ra G$ is an object of  $\cat{Gauge}$ and $(g,h)$ is an object of $\cat{Conn}$. The source of $(\gamma, (g,h))$ is $(g,h)$ and the target is $(\gamma.g, \gamma.h)$. Composition of vertical morphisms is defined by:
$$
(\tilde{\gamma}, (\gamma.g, \gamma.h))\circ (\gamma, (g,h)) = (\tilde{\gamma}\gamma, (g,h))
$$

\item \textbf{Squares} are the set of pairs of morphisms of $\cat{Gauge}$ and $\cat{Conn}$, and are denoted $$
\squaremor{(\gamma, \chi), ((g,h),\eta)}\, .
$$ 
Horizontal and vertical sources and targets are given as follows:
  \begin{equation}
\xybiglabels{    \xymatrix@C=7.5pc@R=4pc {
      (g,h) \ar[r]^{((g,h),\eta)} \ar[d]_{(\gamma, (g,h))} \drtwocell<\omit>{\omit *+[F]{(\gamma,\chi),((g,h),\eta)}}  &  (g',h') \ar[d]^{(\gamma',(g',h')) }  \\
      {(\gamma.g, \gamma.h) } \ar[r]_{((\gamma.g, \gamma.h), (\gamma,\chi).\eta)}  & {(\gamma'.g',\gamma'.h')}
    }}
  \label{eq:square-sourcetarget}   
  \end{equation}

\item \textbf{Composition} Horizontal and vertical composition of squares are given by:
  \begin{equation}\label{eq:squarehorizcomp}
    \squaremor{ (\gamma', \chi'),((g',h'),\eta') } 
    \, \circ_h \,  \squaremor{(\gamma, \chi),((g,h),\eta) } 
    \, = \, \squaremor{ (\gamma, \chi'\chi),((g,h),\eta'\eta) }
  \end{equation}
 and:
  \begin{equation}\label{eq:squarevertcomp}
  \squaremor{ (\tilde{\gamma}, \tilde{\chi}), ((\gamma.g,\gamma.h), (\gamma, \chi).\eta) } 
   \,  \circ_v \,     \squaremor{ (\gamma, \chi),((g,h),\eta) }
 \,   = \, \squaremor{ (\tilde{\gamma}\gamma, \tilde{\chi} (\tilde{\gamma}\rhd\chi)), ((g,h),\eta)}
  \end{equation}
\end{itemize}

\end{definition}

\begin{remark}
  As noted in the introduction, we remark here (because the terms will
  appear throughout our forthcoming paper \cite{morton-picken-iii}) that
  we also refer to the horizontal morphisms of this double groupoid as
  \textit{costrict gauge transformations}, and the vertical morphisms
  as \textit{strict gauge transformations}. The squares, in this
  usage, will be called \textit{gauge modifications} between such
  transformations.
\end{remark}

\section{Geometrical Examples}\label{sec:geom-examples}

The previous approach was set up to handle any finite number of cells in a tendentially local description of connections, but for simple manifolds it also allows an efficient global description of the action of $\cat{Gauge}$ on $\cat{Conn}$, employing a small number of cells.

The manifolds in our examples will all be of dimension less than 3, so that the connections will be automatically flat, since the curvature 3-form vanishes.

In each of the following examples we will highlight special features of the action.

\subsection{The example of the circle}

The circle can be endowed with a cell decomposition consisting of a single 0-cell $v$ and a single 1-cell $e$ - see Figure \ref{fig:S1-cell}

\begin{figure}[h]
\begin{center}
\includegraphics[height=1cm]{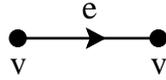}
\end{center}
\caption{Cell decomposition for the circle $S^1$}
\label{fig:S1-cell}
\end{figure}

Since there are no 2-cells, setting $g(e)=g$ and $\eta(e)=\eta$, the objects $\{(g(e),1) \}$ of $\cat{Conn}$ may be identified with $G$ and the morphisms ${\rm Mor}(\cat{Conn})$ with $ G\times H   $. Thus
$$
\cat{Conn}(S^1) \cong \G.
$$
Likewise setting $\gamma(v)=\gamma$ and $\chi(v)=\chi$, the 2-group $\cat{Gauge}$ may be identified with $\G$, since we have a single 0-cell, i.e.
$$
\cat{Gauge}(S^1) \cong \G.
$$

The action of $\cat{Gauge}$ on $\cat{Conn}$ is then the adjoint action of $\G$ on itself, as described in \cite{morton-picken-i}, given by particularizing (\ref{eq:Phimor}):

  \begin{equation}
\label{eq:S1-action}
    \xybiglabels \vcenter{\xymatrix@M=0pt@=3pc@C=5pc{\ar@{-} [d]  \ar@{-} [r]^{\gamma.g} \ar@{} [dr]|{(\gamma,\chi).\eta} & \ar@{-} [d] \\ \ar@{-} [r]_{(\gamma.g)'}  & }}
    \,  = \,
    \xybiglabels \vcenter{\xymatrix @=3pc @W=0pc @M=0pc { \ar@{-}[r] ^{\gamma} \ar@{-}[d]_{} \ar@{}[dr]|{\chi} & \ar@{-}[r] ^{g} \ar@{-}[d]|{}
        \ar@{}[dr]|{\eta} & \ar@{-}[r] ^{\gamma^{-1}} \ar@{-}[d]^{}  \ar@{}[dr]|{\chi^{-h}} & \ar@{-}[d]^{}
        \\ \ar@{-}[r] _{\gamma'} & \ar@{-}[r] _{g'} & \ar@{-}[r] _{\gamma'^{-1}} &
      }}.
  \end{equation}
  
  Thus for the case of $S^1$, the transformation double groupoid arising from this action is given by:
$$
\cat{Conn} \wquot \cat{Gauge} (S^1) \cong \G \wquot \G
$$
 where the adjoint action is understood, in complete  analogy with the gauge groupoid $G \wquot G$ for ordinary $G$-gauge theory on the circle. The squares of this transformation double groupoid are:
  \begin{equation}
\xybiglabels{    \xymatrix@C=7.5pc@R=4pc {
      g \ar[r]^{(g,\eta)} \ar[d]_{(\gamma, g)} \drtwocell<\omit>{\omit *+[F]{(\gamma,\chi),(g,\eta)}}  &  g' \ar[d]^{(\gamma',g') }  \\
      {\gamma.g } \ar[r]_{(\gamma.g, (\gamma,\chi).\eta)}  & {\gamma'.g'}
    }}
  \label{eq:square-S1}   
  \end{equation}
We refer the reader to \cite{morton-picken-i} for a detailed description of the transformation double groupoid $\G\wquot \G$. 

Finally a suggestive 3-dimensional perspective on the action is given by taking a commuting 3-cube and identifying two opposite faces as in Figure \ref{fig:3D-circle}. Note that to simplify this figure we have abbreviated:
\begin{align}
  \eta_1 & = (\gamma,\chi).\eta \nonumber\\
  g_1 & = \gamma . g \nonumber\\
  g'_1 & = \gamma' . g' \nonumber
\end{align}

\begin{figure}[h]
  \begin{center}
    \includegraphics{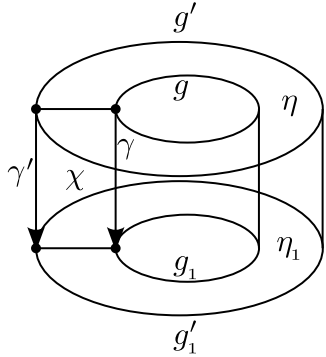}
  \end{center}
  \caption{3D perspective for $S^1$}\label{fig:3D-circle}
\end{figure}

\subsection{The example of the sphere }
\label{subsec:S2-exp}
The sphere $S^2$ can be realized as a single 2-cell $f$ with the bigon structure as in Figure  \ref{fig:S2-cell}, where the 1-cell edge $d$ is identified with $e$. 

\begin{figure}[h]
\begin{center}
\includegraphics[height=3cm]{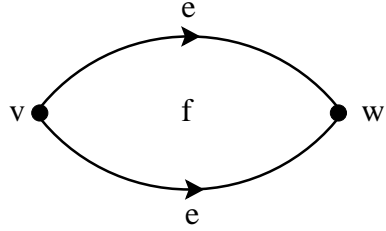}
\end{center}
\caption{Cell decomposition for the sphere $S^2$}
\label{fig:S2-cell}
\end{figure}

We start by describing $\cat{Conn}(S^2)$. Setting $h(f)=h$, $g(e)=g$ one has $\partial(h)=gg^{-1}=1$. Hence the objects of $\cat{Conn}(S^2)$ are 
$\{(g,h)\,|\, \partial(h)=1  \}=G\times {\rm ker}(\partial)$. Setting $\eta(e)=\eta$, the morphisms of $\cat{Conn}(S^2)$ are 
$\{((g,h),\eta) \,|\, (g,h)\in G\times  {\rm ker}(\partial), h\in H \} $ and the target of $((g,h),\eta)$ is $(g',h')$, where $g'=\partial(\eta)g$, from (\ref{eq:conn-morphisms-edge}), and $h'$ is given, from (\ref{eq:conn-morphisms}), by vertical conjugation of $h$ by $\eta$ (see Remark \ref{rem:h'-formula}):
\begin{equation*}
\xybiglabels \vcenter{\xymatrix@M=0pt@=3pc{\ar@{-} [d] _{} \ar@{-} [r]^{g'} \ar@{} [dr]|{h'} & \ar@{-} [d]^{} \\
\ar@{-} [r]_{g'}  & }} = 
\xybiglabels \vcenter{\xymatrix@M=0pt@=3pc{\ar@{-} [d] _{} \ar@{-} [r]^{g'} \ar@{} [dr]|{\eta^{-v}} & \ar@{-} [d]^{} \\
\ar@{-} [r] |{g} \ar@{-} [d]_{} \ar@{}[dr] |{h} & \ar@{-} [d]^{} \\
\ar@{-} [r] |{g} \ar@{-} [d]_{} \ar@{}[dr] |{\eta} & \ar@{-} [d]^{} \\
\ar@{-} [r]_{g'}& }} 
\end{equation*}

Since there are two 0-cells, $v$ and $w$, we have 
$$
\cat{Gauge}(S^2) \cong \G\times \G
$$
Setting $\gamma(v)=\gamma_1$, $\gamma(w)=\gamma_2$, 
$\chi(v)=\chi_1$, $\chi(w)=\chi_2$, the action of $(\gamma_1,\gamma_2)$ on $(g,h)$ is described by (\ref{eq:Phi-ob-face}):
\begin{equation}
\xybiglabels \vcenter{\xymatrix@M=0pt@=3pc@C=4pc{\ar@{-} [d] \ar@{-} [r]^{\gamma. g} \ar@{} [dr]|{\gamma.h} & \ar@{-} [d] \\
\ar@{-} [r]_{\gamma. g}  & }}
\, = \,
\xybiglabels \vcenter{\xymatrix @=3pc @W=0pc @M=0pc { \ar@{-}[r] ^{\gamma_1} \ar@{-}[d]_{} \ar@{}[dr]|{} & \ar@{-}[r] ^{g} \ar@{-}[d]|{}
\ar@{}[dr]|{h} & \ar@{-}[r] ^{\gamma_2^{-1}} \ar@{-}[d]^{}  \ar@{}[dr]|{} & \ar@{-}[d]^{}
\\ \ar@{-}[r] _{\gamma_1} & \ar@{-}[r] _{g} & \ar@{-}[r] _{\gamma_2^{-1}} &
}} 
\, = \,
\xybiglabels \vcenter{\xymatrix@M=0pt@=3pc@C=4pc{\ar@{-} [d] \ar@{-} [r]^{\gamma_1 g \gamma_2^{-1}} \ar@{} [dr]|{\gamma_1\rhd h} & \ar@{-} [d] \\
\ar@{-} [r]_{\gamma_1 g \gamma_2^{-1}}  & }}
\label{eq:S2gauge}
\end{equation}
and the action of $(\gamma,\chi)=((\gamma_1,\gamma_2),(\chi_1,\chi_2))$ on $(g,\eta)$ is described by (\ref{eq:Phimor}):
$$
    \xybiglabels \vcenter{\xymatrix@M=0pt@=3pc@C=5pc{\ar@{-} [d]  \ar@{-} [r]^{\gamma.g} \ar@{} [dr]|{(\gamma,\chi).\eta} & \ar@{-} [d] \\ \ar@{-} [r]_{\gamma'.g'}  & }}
    \,  = \,
    \xybiglabels \vcenter{\xymatrix @=3pc @W=0pc @M=0pc { \ar@{-}[r] ^{\gamma_1} \ar@{-}[d]_{} \ar@{}[dr]|{\chi_1} & \ar@{-}[r] ^{g} \ar@{-}[d]|{}
        \ar@{}[dr]|{\eta} & \ar@{-}[r] ^{\gamma_2^{-1}} \ar@{-}[d]^{}  \ar@{}[dr]|{\chi_2^{-h}} & \ar@{-}[d]^{}
        \\ \ar@{-}[r] _{\gamma_1'} & \ar@{-}[r] _{g'} & \ar@{-}[r] _{\gamma_2'^{-1}} &
      }}.
$$
Thus $\cat{Gauge}(S^2)$ acts on $\cat{Conn}(S^2)$ via two independent $\G$ actions from the left and from the right.

\begin{remark}
In \cite[Examples 74, 75]{bullivant-et-al} the holonomy along $S^2$ is approached using two different cell structures.
The first has a single vertex and a single 2-cell attached to it, and the second is a subdivision of
our cell decomposition of Fig. \ref{fig:S2-cell}, using four 1-cells from $v$ to $w$ to divide $f$ into four faces.
In both cases the holonomy, corresponding to our $h(f)$, is likewise given in \cite{bullivant-et-al} by an element of ${\rm ker}(\partial)$.
Under full gauge transformations, corresponding to our action (\ref{eq:S2gauge}), the result \cite[Thm. 97]{bullivant-et-al}
is the same as ours, namely the holonomy $h(f)$ is transformed into $\gamma(v)\rhd h(f)$, where $v$ is a vertex on the 2-sphere surface.

\end{remark}

\subsection{The example of the torus}
\label{subsec:T2-exp}

The torus $T^2$ can be realized with a single 0-cell $v$, two 1-cells $e_1$ and $e_2$, and a single 2-cell $f$ with bigon structure as depicted in Figure \ref{fig:T2-cell}. 
Note that  this is an example of a 2-cell with bigon structure, as in Figure \ref{fig:bigon-struc}, where $e$ and $d$ are both concatenations of more than one 1-cell.

\begin{figure}[h]
\begin{center}
\includegraphics[height=2.5cm]{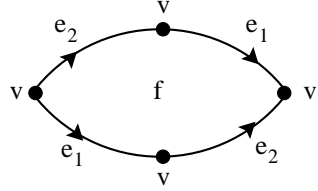}
\end{center}
\caption{Cell decomposition for the torus $T^2$}
\label{fig:T2-cell}
\end{figure}

We start by describing $\cat{Conn}(T^2)$. Setting $g(e_1)=g_1$, $g(e_2)=g_2$, and $h(f)=h$, we have $\partial(h)=g_1g_2g_1^{-1}g_2^{-1}$, and hence the objects of $\cat{Conn}(T^2)$ are of the form:
$$
((g_1,g_2),h) \in G^2\times \partial^{-1}([G,G]),
$$
where $[G,G]$ denotes the commutator subgroup of $G$. Likewise setting $\eta(e_1)=\eta_1$, $\eta(e_2)=\eta_2$, the morphisms of $\cat{Conn}(T^2)$ 
are given by: 
$$
(((g_1,g_2),h), (\eta_1, \eta_2)) \in G^2\times \partial^{-1}([G,G]) \times H^2.
$$
The target  of a morphism $(((g_1,g_2),h), (\eta_1, \eta_2))$ is $((g_1',g_2'),h')$, where $g_i'=\partial(\eta_i)g_i$, and $h'$ is given by

$$
\xybiglabels \vcenter{\xymatrix@M=0pt@=3pc{\ar@{-} [d] _{} \ar@{-} [r]^{g_2'g_1'} \ar@{} [dr]|{h'} & \ar@{-} [d]^{} \\
\ar@{-} [r]_{g_1'g_2'}  & }}
=
\xybiglabels \vcenter{\xymatrix @=3pc @W=0pc @M=0pc @C=3pc { 
\ar@{-}[r] ^{g_2'} \ar@{-}[d]_{} \ar@{}[dr]|{\eta_2^{-v}} & \ar@{-}[r] ^{g_1'} \ar@{-}[d]|{} \ar@{}[dr]|{\eta_1^{-v}} & \ar@{-}[d]^{} \\ 
\ar@{-}[r] |{g_2} \ar@{-}[d]_{} \ar@{}[dr]|{} & \ar@{-}[r] |{g_1} \ar@{}[d]|{h} \ar@{}[dr]|{}   & \ar@{-}[d]^{} \\ 
\ar@{-}[r] |{g_1} \ar@{-}[d]_{} \ar@{}[dr]|{\eta_1} & \ar@{-}[r] |{g_2} \ar@{-}[d]|{} \ar@{}[dr]|{\eta_2} & \ar@{-}[d]^{} \\ 
\ar@{-}[r] _{g_1'} & \ar@{-}[r] _{g_2'}  &
}}
$$
where, on the right hand side, the horizontal compositions are performed first.

Since there is a single 0-cell $v$. we have
$$
\cat{Gauge}(T^2)\cong \G
$$
Setting $\gamma(v)=\gamma$, $\chi(v)=\chi$, the action of $\gamma$ on $((g_1,g_2),h)$ is given by $\gamma.g_i= \gamma g_i \gamma^{-1}$, for $i=1,2$, and
$$
\xybiglabels \vcenter{\xymatrix@M=0pt@=3pc@C=5pc{\ar@{-} [d] \ar@{-} [r]^{(\gamma. g_2)(\gamma. g_1)} \ar@{} [dr]|{\gamma.h} & \ar@{-} [d] \\
\ar@{-} [r]_{(\gamma. g_1)(\gamma. g_2)}  & }}
\, = \,
\xybiglabels \vcenter{\xymatrix @=3pc @W=0pc @M=0pc { \ar@{-}[r] ^{\gamma} \ar@{-}[d]_{} \ar@{}[dr]|{} & \ar@{-}[r] ^{g_2g_1} \ar@{-}[d]|{}
\ar@{}[dr]|{h} & \ar@{-}[r] ^{\gamma^{-1}} \ar@{-}[d]^{}  \ar@{}[dr]|{} & \ar@{-}[d]^{}
\\ \ar@{-}[r] _{\gamma} & \ar@{-}[r] _{g_1g_2} & \ar@{-}[r] _{\gamma^{-1}} &
}} 
$$

The action of $(\gamma,\chi)$ on $(((g_1,g_2),h),\eta)$ is described by:
$$
\xybiglabels \vcenter{\xymatrix@M=0pt@=3pc@C=5pc{\ar@{-} [d] \ar@{-} [r]^{\gamma. g_i} \ar@{} [dr]|{(\gamma,\chi).\eta_i} & \ar@{-} [d] \\
\ar@{-} [r]_{\gamma'. g_i'}  & }}
\, = \,
\xybiglabels \vcenter{\xymatrix @=3pc @W=0pc @M=0pc { \ar@{-}[r] ^{\gamma} \ar@{-}[d]_{} \ar@{}[dr]|{\chi} & \ar@{-}[r] ^{g_i} \ar@{-}[d]|{}
\ar@{}[dr]|{\eta_i} & \ar@{-}[r] ^{\gamma^{-1}} \ar@{-}[d]^{}  \ar@{}[dr]|{\chi^{-h}} & \ar@{-}[d]^{}
\\ \ar@{-}[r] _{\gamma'} & \ar@{-}[r] _{g_i'} & \ar@{-}[r] _{\gamma'^{-1}} &
}} 
$$
for $i=1,2$.

The whole action can be captured in a single 4D diagram, depicted in Figure \ref{fig:2gauge-torus}, consisting of eight commuting 3-cubes (the inner 3-cube, its six adjacent 3-cubes and the outer 3-cube). 
In this figure, squares of a more general type appear, with possibly non-trivial labels on the side edges, as opposed to the squares (\ref{eq:square}).
Focussing on the uppermost 3-cube, which describes the relation between the source and the target of the morphism $(((g_1,g_2),h), (\eta_1, \eta_2))$, we see that the target can be viewed as coming from the source by simultaneous horizontal and vertical conjugation by $\eta_1$ and $\eta_2$ - compare with \cite[Thm. 5.17]{martinspickenii}.

\begin{figure}[h]
  \begin{center}
    \includegraphics[height=6.9cm]{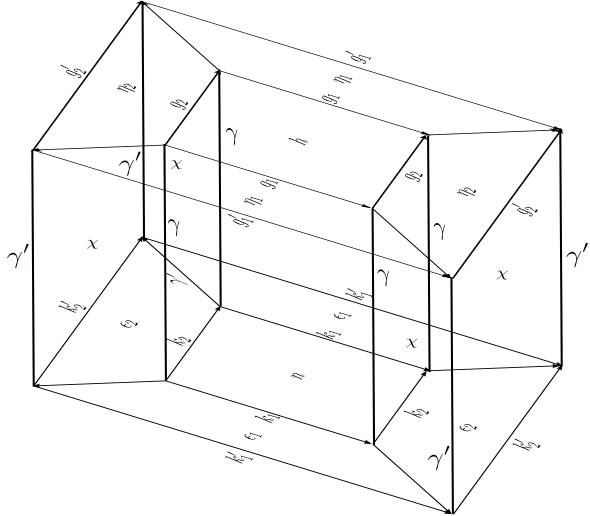}
  \end{center}
  \caption{4D perspective for the torus}\label{fig:2gauge-torus}
\end{figure}

\newpage

Note that in this figure, we have used the following shorthand
notation to maintain readability:
\begin{align}
  k_i & = \gamma g_i \gamma^{-1} \nonumber \\
  g'_i & = \partial (\eta_i) g_i \nonumber \\
  \epsilon_i & = \chi \eta_i \chi^{-h}\nonumber  \\
  k'_i & = \partial (\epsilon_i) k_i \nonumber
\end{align}


\section*{Acknowledgements}

We are grateful to João Faria Martins and Björn Gohla for discussions.
This work was supported in part by the Deutsche Forschungsgemeinschaft (Germany) 
through the Research Training Group 1670 (Mathematics Inspired by String Theory and 
Quantum Field Theory) of the University of Hamburg,  and by the 
Fundação para a Ciência e a Tecnologia  (Portugal), projects UID/MAT/04459/2013 and
PTDC/MAT-PUR/31089/2017 (Higher Structures and Applications).

\end{document}